\renewcommand\footnotetextcopyrightpermission[1]{} 
\setlist{  
  listparindent=\parindent,
  parsep=0pt,
}
\newif\ifsubmission
\newif\ifshepherd
\newcommand{\shepherding}[1]{\textcolor{blue}{#1}}
\newcommand{\shepherding}[1]{#1}
\newcommand{\note}[1]{}
\newcommand{\changed}[1]{#1}
\newcommand{\reviewone}[1]{#1}
\newcommand{\reviewtwo}[1]{#1}
\newcommand{\reviewthree}[1]{#1}
\newcommand{\onethree}[1]{#1}
\newcommand{\cut}[1]{}
\newcommand{\note}[1]{\textcolor{blue}{\bf [NOTE: #1]}}
\newcommand{\changed}[1]{\textcolor{red}{#1}}
\newcommand{\reviewone}[1]{\textcolor{red}{#1}}
\newcommand{\reviewtwo}[1]{\textcolor{blue}{#1}}
\newcommand{\reviewthree}[1]{\textcolor{ForestGreen}{#1}}
\newcommand{\onethree}[1]{\textcolor{Purple}{#1}}
\newcommand{\cut}[1]{}
\newcommand{\zs}[1]{\textcolor{SkyBlue}{\bf [ZS: #1]}}
\newcommand{\ci}{result range\xspace}
\newcommand{\cis}{result ranges\xspace}
\newcommand{\milp}{MILP\xspace}
\long\def\comment#1{}
\newcommand{\stitle}[1]{\vspace{1ex} \noindent{\bf #1}}
\title{Fast and Reliable Missing Data Contingency Analysis with Predicate-Constraints}
\author{Xi Liang}
\affiliation{%
  \institution{University of Chicago}
}
\email{xiliang@uchicago.edu}
\author{Zechao Shang}
\affiliation{%
  \institution{University of Chicago}
}
\email{zcshang@uchicago.edu}
\author{Aaron J. Elmore}
\affiliation{%
  \institution{University of Chicago}
}
\email{aelmore@uchicago.edu}
\author{Sanjay Krishnan}
\affiliation{%
  \institution{University of Chicago}
}
\email{skr@uchicago.edu}
\author{Michael J. Franklin}
\affiliation{%
  \institution{University of Chicago}
}
\email{mjfranklin@uchicago.edu}
\begin{document}


\pagenumbering{arabic}
\setcounter{page}{1}

\begin{abstract}
Today, data analysts largely rely on intuition to determine whether missing or withheld rows of a dataset significantly affect their analyses.
We propose a framework that can produce automatic contingency analysis, i.e., the range of values an aggregate SQL query could take, under formal constraints describing the variation and frequency of missing data tuples.
We describe how to process \textsf{SUM}, \textsf{COUNT}, \textsf{AVG}, \textsf{MIN}, and \textsf{MAX} queries in these conditions resulting in hard error bounds with testable constraints.
We propose an optimization algorithm based on an integer program that reconciles a set of such constraints, even if they are overlapping, conflicting, or unsatisfiable, into such bounds. We also present a novel formulation of the Fractional Edge Cover problem to account for cases where constraints span multiple tables. Our experiments on 4 datasets against several statistical imputation and inference baselines show that statistical techniques can have a deceptively high error rate that is often unpredictable. In contrast, our framework offers hard bounds that are guaranteed to hold if the constraints are not violated. In spite of these hard bounds, we show competitive accuracy to statistical baselines.
\end{abstract}

\maketitle
\thispagestyle{plain}

\section{Introduction}
The data stored in a database may differ from real-world truth in terms of both completeness and content.
Such issues can arise due to data entry errors, inexact data integration, or software bugs~\citep{chu2016data}. 
As real-world data are rarely perfectly clean or complete, data scientists have to reason how potential sources of error may affect their analyses.
Communicating these error modes and quantifying the uncertainty they introduce into a particular analysis is arguably as important as timely execution~\citep{kraska2018northstar}. 

For example, suppose a data analyst has collected data from a temperature sensor over the span of several days.
She is interested in computing the number of times that the sensor exceeded a temperature threshold.
The data are stored in $10$ partitions; one of which failed to load into the database due to parsing errors.
The analyst can still run her query on the $9$ available partitions, however,  she needs to determine whether the loss of that partition
may affect her conclusions.

Today, analysts largely rely on intuition to reason about such scenarios.
The analyst in our example needs to make a judgment about whether the lost partition correlates with the attributes of interest, such as temperature, in any way.
Such intuitive judgments, while commonplace, are highly problematic because they are based on assumptions that are often not formally encoded in any code or documentation.
Simply reporting an extrapolated result does not convey any measure of confidence in how (in)accurate the result might be, and could hide the fact that some of the data were not used.

This paper defines a framework for specifying \reviewthree{beliefs about missing rows in a dataset} in a logical constraint language and an algorithm for computing a range of values an aggregate query can take under those constraints (hereafter called a \ci).\footnote{We use this term to differentiate a deterministic range with probabilistic confidence intervals.} This framework, which we call the Predicate-Constraint (PC) framework, facilitates several desirable outcomes: (1) the constraints are efficiently testable on historical data to determine whether or not they held true in the past, (2) the \ci is calculated deterministically and guaranteed to bound the results if the constraints hold true in the future, (3) the framework can reconcile interacting, overlapping, or conflicting \reviewthree{constraints} by enforcing the most restrictive ones, and (4) the framework makes no distributional assumptions about past data resembling future data other than what is specified in the constraints. 
With this framework, a data scientist can automatically produce a contingency analysis, i.e., the range of values the aggregate could take, under formally described assumptions about the nature of the unseen data.
\reviewthree{Since the assumptions are formally described and completely determine the \cis, they can be checked, versioned, and tested just like any other analysis code---ultimately facilitating a more reproducible analysis methodology.}

\reviewthree{The constraints themselves, called }Predicate-Constraints, are logical statements that constrain the range of values that a set of rows can take and the number of such rows within a predicate.
We show that deriving the \cis for a single ``closed'' predicate-constraint set can be posed as a mixed-integer linear program (\milp).
We show links to the Fractional Edge Cover bounds employed in the analysis of worst-case optimal joins when we consider constraints over multiple tables and contribute a new variant of the same problem which can be used to bound predicate-constraints~\citep{ngo2018worst}.
The solver itself contains a number of novel optimizations, which we contribute such as early pruning of unsatisfiable search paths.

To the best of our knowledge, a direct competitor framework does not exist. 
While there is a rich history of what-if analysis~\citep{deutch2013caravan} and how-to analysis~\citep{meliou2011reverse}, which characterize a database's behavior under hypothetical updates, analyzing the effects of constrained missing rows on aggregate queries has been not been extensively studied.
The closest such framework is the m-table framework~\citep{sundarmurthy2017m}, which has a similar expressiveness but no algorithm for computing aggregate \cis.
Likewise, some of the work in data privacy solves a simplified version of the problem where there are no overlapping constraints or join conditions~\citep{zhang2007aggregate}. 

In summary, we contribute:
\begin{enumerate}
\item A formal framework for contingency analysis over missing or withheld rows of data, where users specify constraints about the frequency and variation of the missing rows.
\item An optimization algorithm that reconciles a set of such constraints, even if they are overlapping, conflicting, or unsatisfiable, into a range of possible values that \textsf{SUM}, \textsf{COUNT}, \textsf{AVG}, \textsf{MIN}, and \textsf{MAX} SQL queries can take.
\item A novel formulation of the Fractional Edge Cover problem to account for cases where constraints span multiple tables and improve accuracy for natural joins.
\item Meta-optimizations that improve accuracy and/or optimization performance such as pruning unsatisfiable constraint paths.
\end{enumerate}

\section{Background}
\label{sec:background}
\shepherding{In this paper, we consider the following user interface. The system is asked to answer SQL aggregate queries over a table with a number of missing rows. The user provides a set of constraints (called predicate-constraints) that
describes how many such rows are missing and a range of possible attribute values those missing rows could take. 
The system should integrate these constraints into its query processing and compute the maximal range of results (aggregate values) consistent with those constraints.}

All of our queries are of the form:
\begin{lstlisting}
SELECT agg(attr)
FROM R1,...,RN
WHERE ....
GROUP BY ....
\end{lstlisting}
We consider \textsf{SUM}, \textsf{COUNT}, \textsf{AVG}, \textsf{MIN}, and \textsf{MAX} aggregates with predicates and possible inner join conditions. 
\shepherding{Because GROUP-BY clause can be considered as a union of such queries without GROUP-BY. In the rest of the paper, we focus on queries without GROUP-BY clause.}

\subsection{Example Application}
\label{sec:ex-application}
\onethree{Consider a simplified sales transaction table of just three attributes:}
\begin{lstlisting}
Sales(utc, branch, price)
Nov-01 10:20,New York,3.02
Nov-01 10:21,Chicago,6.71
...
Nov-16 6:42,Trenton,18.99
\end{lstlisting}
\onethree{Over this dataset, a data analyst is interested in calculating the total number of sales:}
\begin{lstlisting}
SELECT SUM(price)
FROM Order
\end{lstlisting}
\onethree{Suppose that between November 10 and November 13 there was a network outage that caused data from the New York and Chicago branches to be lost. How can we assess the effect of the missing data on the query result?}

\vspace{0.25em} \noindent \onethree{\textbf{Simple Extrapolation: } One option is to simply extrapolate the SUM for the missing days based on the data that is available. While this solution is simple to implement, it leads to subtle assumptions that can make the analysis very misleading. Extrapolation assumes that the missing data comes from roughly the same distribution as the data that is present. Figure \ref{se-teaser} shows an experiment on one of our experimental datasets. We vary the mount of missing data in a way that is correlated with a SUM query. Even if the exact amount of missing data is known, the estimate become increasingly error prone.
More subtly, extrapolation returns a single result without a good measure of uncertainty---there is no way to know how wrong an answer might be.}

\begin{figure}[t]
    \centering
    \includegraphics[width=1\linewidth]{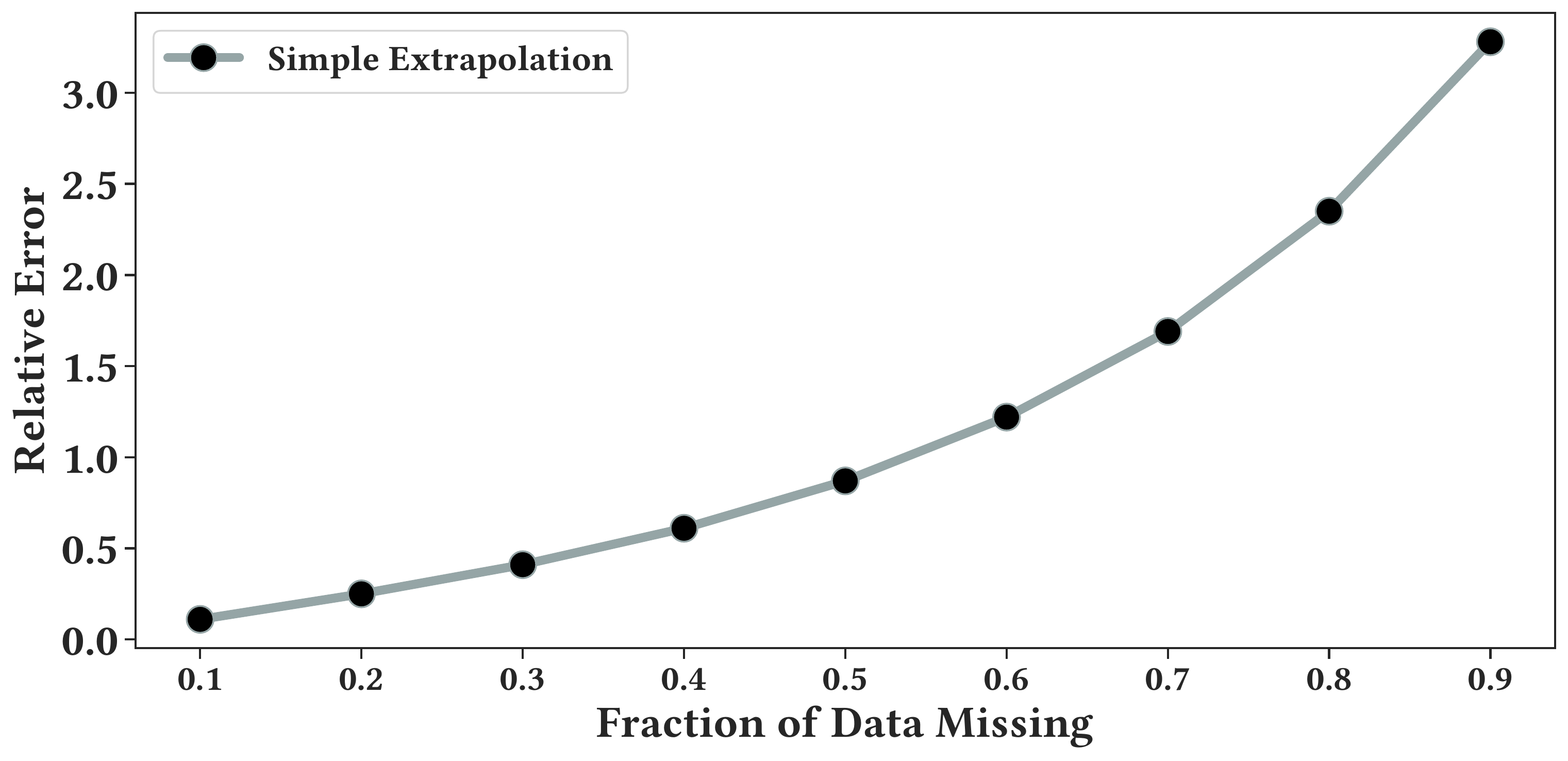}
    \caption{Simple Extrapolation could introduce significant error when missing rows are correlated (e.g, tend to have the highest values).}
    \label{se-teaser}
\end{figure}

\vspace{0.25em} \noindent \onethree{\textbf{Better Extrapolation: } A smarter approach might be to build a probabilistic model that identifies trends and correlations in the data (e.g., a Gaussian Mixture Model that identifies weekly patterns) and use that model to extrapolate. 
If a user mis-specifies her belief in the data distribution or sampling process, any inference would be equally fallible as simple extrapolation. The probabilistic nature of the inference also makes potential failure models hard to interpret---errors could arise due to modeling, sampling, or even approximation error in the model fitting process.}

\vspace{0.25em} \noindent \onethree{\textbf{Our Approach: } In light of these issues, we propose a fully deterministic model for quantifying the uncertainty in a query result due to missing data. Like the probabilistic approach, we require that the user specify her belief about the distribution of missing data. Rather than specifying these beliefs in terms of probability distributions, she specifies the beliefs in terms of hard constraints. \\
For example, there are no more than 300 sales each day in Chicago. Or, the most expensive product costs $149.99$ and no more than 5 are sold each day. We collect a system of such constraints, and solve an optimization problem to find the maximal sum possible for all missing data instances that satisfy the constraints. This formalism acts as a programming framework that the analyst can use to test the effects of different scenarios. It crucially enforces that there are testable constraints that are recorded during the decision making process. We will use this as an example throughout the paper.}

\section{Predicate-Constraints}
\label{sec:pc}
\shepherding{The formal problem setting is defined as follows.
Let $R$ be relation with a known ``certain'' partition denoted by $R^*$ and unknown ``missing'' partition $R^?$, where $R = R^* \cup R^?$.
The user defines a set of constraints $\pi_1,...,\pi_n$ over the possible tuples that could be in $R^?$ and their multiplicity.
The computational problem is to derive a \ci, namely, the min and max value that one of the supported aggregate queries given all possible instances of $R^?$ that are valid under the constraints.
This section describes our language for expressing constraints.
Suppose, this relation is over the attributes $A=\{a_1,...,a_p\}$. The domain of each attribute $a_i$ is a set denoted by $\textsf{dom}(a_i)$.}

\subsection{Predicate-Constraint}
\label{sec:pc-exp}
\shepherding{If $R^?$ could be arbitrary, there is clearly no way to bound its effect on an aggregate query. Predicate-constraints restrict the values of tuples contained in $R^?$ and the total cardinality of such tuples.
A single predicate-constraint defines a condition over the $R^?$, for tuples that satisfy the condition a range of attribute values, and the minimum and maximum occurrence of this predicate.
As an example predicate-constraint in the sales dataset described in the previous section, ``the most expensive product in Chicago costs 149.99 and no more than 5 are sold''. This statement has a predicate (``in Chicago''), a constraint over the values (``cost at most 149.99''), and occurrence constraint (``no more than 5''). We will show how to express systems of such constraints in our framework.}

\vspace{0.5em} \noindent \textbf{Predicate: } A predicate $\psi$ is a Boolean function that maps each possible rows to a True and False value $\psi:\mathcal{D} \mapsto \mathbb{Z}_2 $. For efficient implementation, we focus on predicates that are conjunctions of ranges and inequalities. This restriction simplifies satisfiability testing, which is an important step in our algorithms \reviewthree{introduced in Section \ref{sec:cell-decomp}}.

\vspace{0.5em} \noindent \textbf{Value Constraint: } A value constraint specifies a set of ranges that each attribute can take on. A range of the attribute $a_i$ is defined as a row of two elements $(l,h) \in \textsf{dom}(a_i)$ where $l \leq h$. A value constraint $\nu$ is a set of ranges for each of the $p$ attributes:
\[
\nu = \{(l_1,h_1),...,(l_p,h_p)\}
\]
$\nu$ defines a Boolean function as above $\nu:\mathcal{D} \mapsto \mathbb{Z}_2 $ that checks whether a row satisfies all the specified ranges. 
Since we focus on bounding aggregates it is sufficient to assume that the attribute ranges are over numerical attributes.


\vspace{0.5em} \noindent \textbf{Frequency Constraint: } Associated with each predicate is additionally a frequency constraint. This bounds the number of times that rows with the predicate appear. The frequency constraint is denoted as $\kappa = (k_l,k_u)$. \shepherding{$k_l$ and $k_u$ specify a range of the frequency constraint, i.e., there are at least $k_l$ rows and at most $k_u$ rows that satisfy the predicates in $R^?$. Of course, $k_l, k_u$ must be non-negative numbers and $k_l \leq k_u$.}


\vspace{0.5em} \noindent \textbf{Predicate Constraint: } A predicate-constraint is a three-tuple of these constituent pieces $\pi = (\psi, \nu, \kappa)$, a predicate, a set of value constraints, and a frequency constraint. The goal of $\pi$ is to define constraints on relations that satisfy the above schema. 
Essentially a predicate constraint says if $R$ is a relational instance that satisfies the predicate-constraint $\pi$ \textbf{``For all rows that satisfy the predicate $\psi$, the values are bounded by $\nu$ and the number of such rows is bounded by $\kappa$''}. Formally, we get the definition below.

\begin{definition}[Predicate Constraint]
A predicate constraint is a three-tuple consisting of a predicate, a value constraint, and a frequency constraint $\pi = (\psi, \nu, \kappa)$.
Let $R$ be a relational instance over the attributes $A$. $R$ satisfies a predicate constraint denoted by $R \models \pi$ if:
\reviewthree{
\[
(\forall r \in R: \psi(r) \implies \nu(r)) ~~~ \wedge 
k_l \le |\{r \in R: \psi(r)\}| \le k_u
\]
} 
\end{definition}
\reviewthree{Let us now consider several examples of predicate constraints using the sales data example in the previous section on two days worth of missing rows. For example, we want to express the constraint ``the most expensive product in Chicago costs $149.99$ and no more than 5 are sold'':
\[
c_1: \text{(branch = `Chicago')} \implies \text{(0.00} \leq \text{price} \leq \text{149.99)} ,~ (0, 5) 
\]
\shepherding{In this example, the predicate is $\text{(branch = `Chicago')}$, the value constraint is $\text{(0.00} \leq \text{price} \leq \text{149.99)}$ and the frequency constraint is (0, 5). If the aforementioned predicate constraint is describing a sales dataset, then it specifies that there are at most 5 tuples in the dataset with value of the $branch$ attribute equals to `Chicago' and the range of the $price$ attribute of these tuples are between 0.0 and 149.99 (inclusive).}

We could tweak this constraint to be ``the most expensive product in ALL branches costs $149.99$ and no more than 100 are sold'':
\[
c_2: \text{TRUE} \implies \text{(0.00} \leq \text{price} \leq \text{149.99)} ,~ (0, 100) 
\]
Suppose one wanted to define a simple histogram over a single attribute based the number of sales in each branch, we could express that with a tautology:
\[
\text{(branch = `Chicago')} \implies \text{(branch = `Chicago')} ,~ (100, 100)
\]
\[
\text{(branch = `New York')} \implies \text{(branch = `New York')} ,~ (20, 20)
\]
\[
\text{(branch = `Trenton')} \implies \text{(branch = `Trenton')} ,~ (10, 10)
\]
Observe how $c1$ and $c2$ interact with each other. Some of the missing data instances allowed by only $c2$ are disallowed by $c_1$ (Chicago cannot have more than 5 sales at 149.99). 
This interaction will be the main source of difficulty in computing \cis based on a set of PCs.}

\subsection{Predicate-Constraint Sets}\label{closure}
Users specify their assumptions about missing data using a set of predicate constraints
A predicate-constraint set is defined as follows:
\[
        \mathbf{S} = \{\pi_1,...,\pi_n\}
\]
$\mathbf{S}$ gives us enough information to bound the results of common aggregate queries when there is \emph{closure}: every possible missing row satisfies at least one of the predicates.
        \begin{definition}[Closure]
        Let $\mathbf{S}$ be a predicate constraint set with the elements $\pi_i = (\psi_i,\nu_i, \kappa_i)$
        $\mathbf{S}$ is closed over an attribute domain $\mathcal{D}$ if for every $t \in \mathcal{D}$:
        \[
        \exists \pi_i \in \mathbf{S}: \psi_i(t)
        \]
        \end{definition}
Closure is akin to the traditional closed world assumption in logical systems, namely, the predicate constraints completely characterize the behavior of the missing rows over the domain. 

\reviewtwo{To understand closure, let us add a new constraint that says ``the most expensive product in ``New York'' is 100.00 and no more than 10 of them are sold:
\[
c_3: \text{(branch = `New York')} \implies \text{(0.00} \leq \text{price} \leq \text{100.00)} ,~ (0, 10) 
\]
Closure over ${c_1,c_3}$ means that all of the missing rows come from either New York or Chicago.}

\section{Calculating \cis}
This section focuses on a simplified version of the bounding problem. We consider a single table and a single attribute aggregates. Let $q$ denote such an aggregate query. The problem to calculate the upper bound is:
\begin{equation}
u = \max_{R} q(R)
\end{equation}
\[
\text{subject to: } R \models \mathbf{S}
\]
We will show that our bounds are tight--meaning that the bound found by the optimization problem is a valid relation that satisfies the constraints.

Throughout the rest of the paper, we only consider the maximal problem. Unless otherwise noted, our approach also solves the lower bound problem:
\[
l = \min_{R \in \mathcal{R}} q(R)
\]
\[
\text{subject to: } R \models \mathbf{S}
\] 
Specifically, we solve the lower bound problem in two settings. In a general setting where there is no additional constraint, the minimal problem is can be solved by maximizing the negated problem: we first negate the value constraints, solve the maximizing problem with negated weights, and negate the final result.

In a special but also common setting, all the frequency constraints' lower bounds are $0$ (i.e., each relation has no minimum number of missing rows) and the value constraints' lower bounds are $0$ (i.e., all attributes are non-negative), the lower bound is trivially attained by the absent of missing row.

\begin{figure}
    \centering
    \includegraphics[width=\columnwidth]{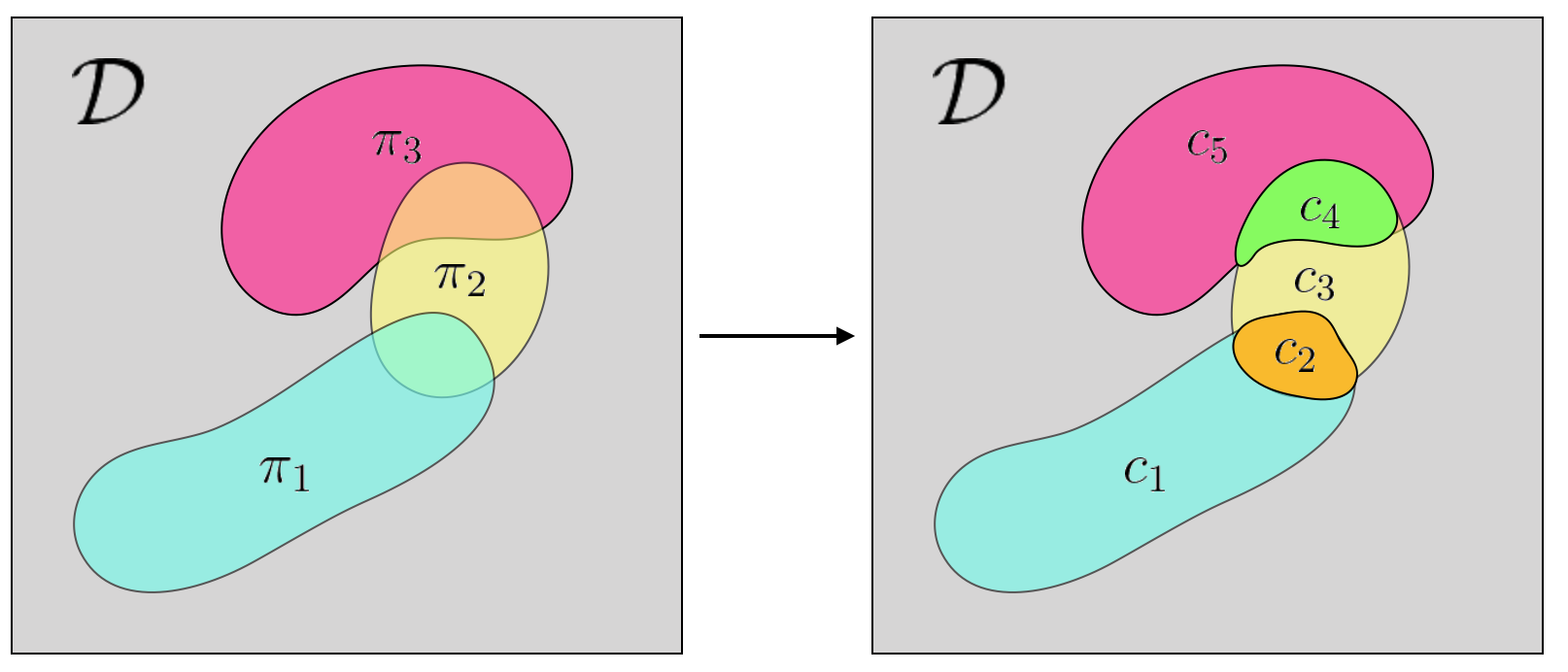}
    \caption{Predicates in a predicate-constraint set are possibly overlapping. The first step of the algorithm is to decompose a set of predicate-constraints into disjoint cells.}
    \label{fig:cell-decomp}
\end{figure}

\subsection{Cell Decomposition}\label{sec:cell-decomp}
Intuitively, we can think about the optimization problem as an allocation. We strategically assign rows in each of the predicates to maximize the aggregate query. However, the first challenge is that a row may fall in multiple Predicate-Constraints' predicates, so this row may ``count towards'' multiple Predicate-Constraints. As the first step of the solution, we decompose the potential overlapping Predicate-Constraints' predicates into disjoint cells.

An example that illustrates the decomposition is depicted in Figure \ref{fig:cell-decomp}. Each predicate represents a sub-domain. For each subset of predicates, a cell is a domain that only belongs to these predicates and not others. Thus, for $n$ predicates in a predicate-constraint set there are at most $O(2^n)$ cells. The cells take the form of conjunctions and negations of the predicates of each of the $n$ predicate constraints:
\[
c_0 = \psi_1 \wedge ... \wedge \psi_n
\]
\[
c_1 = \psi_1 \wedge ... \wedge \neg \psi_n  
\]
\[
c_2 = \psi_1 \wedge ...\wedge \neg \psi_{n-1} \wedge \psi_n 
\]
\[
...
\]
\[
c_{2^n-1} = \neg \psi_1 \wedge ...\wedge \neg \psi_{n-1} \wedge \neg\psi_n  
\]
For each $c_i$, we have to reconcile the active predicate constraints (not negated above).
Each cell is thus assigned the most restrictive upper and lower value bounds, and upper and lower cardinality bounds in the set of active constraints.
Not all possible cells will be satisfiable--where there exists a row $t \in \mathcal{D}$ that satisfies the new predicate-constraint. As in Figure \ref{fig:cell-decomp}, there are $7$ possible subsets, but there are only $5$ satisfiable cells. We use the Z3~\cite{Z3} solver to prune all the cells that are not satisfiable.

\stitle{Optimization 1. Predicate Pushdown:}  Cell decomposition is a costly process for two reasons. First, there is a potentially exponential number of cells. Second, determining whether each cell is satisfiable is not always easy (each check is on the order of 10's of ms). 
One obvious optimization is to push down the predicates of the target query into the decomposition process.
When the target query has predicates, we exclude all cells that do not overlap with the query's predicate.

\stitle{Optimization 2. DFS Pruning:}
The naive solution is to simply sequentially iterate through all of the cells and test each for satisfiability.
Note that for a problem with $n$ PCs, each logical expression describing a cell is a conjunction of $n$ predicates.
Conjunctions can be short-circuited if any of their constituent elements evaluate to false. 
Therefore, the process of evaluating the satisfiability of the cells can be improved by using a Depth First Search that evaluates prefixes of growing length rather than a sequential evaluation. 
 With DFS, we can start from the root node of all expressions of length 1, add new PCs to the expression as we traverse deeper until we reach the leaf nodes which represent expressions of length $n$. As we traverse the tree, if a sub-expression is verified by Z3 to be unsatisfiable, then we can perform pruning because any expression contains the sub-expression is unsatisfiable. 

\stitle{Optimization 3. Expression Re-writing:}  To further improve the DFS process, we can re-write the logical expressions to have an even higher pruning rate. There is one simple re-writing heuristic we apply:
\[
(X \wedge \neg(X\wedge Y)) = True \implies X \wedge \neg Y = True
\]
It means if we have verified a sub-expression $X$ to be satisfiable, and we also verified that after adding a new PC $Y$, the expression $X\wedge Y$ becomes unsatisfiable. We can conclude that $X\wedge \neg Y$ is satisfiable without calling Z3 to verify.
As shown in the experiment section, the DFS pruning technique combined with the rewriting can prune over 99.9\% cells in real-world problems.

\stitle{Optimization 4. Approximate Early Stopping:} 
With the DFS pruning technique, we always get the correct cell decomposition result because we only prune cells that are verified as unsatisfiable.  We also propose an approximation that can trade range tightness for a decreased run time. The idea is to introduce `False-Positives', i.e., after we have used DFS to handle the first $K$ layers (sub-expressions of size $K$), we stop the verification and consider all cells that have not been pruned as satisfiable. These cells are then admitted to the next phase of processing where we use them to formulate a \milp problem that can be solved to get our bound. Admitting unsatisfiable cells introduces `False-Positives' that would make our bound loose, but it will not violate the correctness of the result (i.e. the result is still a bound) because: (1) the `true-problem' with correctly verified cells is now a sub-problem of the approximation and (2) the false-positive cells does not add new constraints to the `true-problem'.

\subsection{Integer-Linear Program}\label{sec:ilp}


We assume that the cells in $\mathcal{C}$ are ordered in some way and indexed by $i$. Based on the cell decomposition, we denote $C_i$ as the (sub-)set of Predicate-Constraints that cover the cell $i$. Then for each cell $i$, we can define its maximal feasible value $U_i(a)=\min_{p\in{}C_i} p.\nu.h_a$, i.e., the minimum of all $C_i$'s value constraints' upper bounds on attribute $a$. 

A single general optimization program can be used to bound all the aggregates of interest.
Suppose we are interested in the \texttt{SUM} over attribute $a$, we slightly abuse the terms and define a vector $U$ where $U_i=U_i(a)$. Then, we can define another vector $X$ which represents the decision variable. Each component is an integer that represents how many rows are allocated to the cell.

The optimization problem is as follows. To calculate the upper bound:
\begin{equation}
    \max_{X} ~~ U^T X
    \label{prob:opt}
\end{equation}
\[
\text{subject to: } ~ 
 ~ \forall j,  k^{(j)}_l \le \sum_{i : j \in C_i} X[i] \le k^{(j)}_u\\
\]
\[
\forall i ~ , ~ X[i] \text{ is integer}
\]
As an example, the first constraint in Figure \ref{fig:cell-decomp} is $k_l^1 <= x_1 + x_2 <= k_u^1$ for $\pi_1$, so on and so forth. 

Given the output of the above optimization problem, we can get bounds on the following aggregates:

\vspace{0.5em} \noindent \textbf{COUNT: } The count of cardinality can be calculated by setting \reviewthree{$U$ as the unit vector ($U_i=1$ for all $i$)}.

\vspace{0.5em} \noindent \textbf{AVG: } We binary search the average result: to testify whether $r$ is a feasible average, we disallow all rows to take values smaller than $r$ and invoke the above solution for the maximum \texttt{SUM} and the corresponding \texttt{COUNT}. If the overall average is above $r$, then we test $r'>r$, otherwise we continue testing $r'<r$. 


\vspace{0.5em} \noindent \textbf{MAX/MIN: } Assuming all cells are feasible, the max is the largest of all cells' upper bound. Min can be handled in a similar way.

\stitle{Faster Algorithm in Special Cases}  In the case that the Predicate-Constraints have disjoint predicates, the cell decomposition problem becomes trivial as each predicate is a cell. The MILP problem also degenerates since all the constraints, besides each variable being integer,  do not exist at all. Thus, if we take the \texttt{SUM} problem as an example, the solution is simply the sum of each Predicate-Constraints' maximum sum, which is the product of the maximum value and the maximum cardinality. This disjoint case is related to the work in data privacy by Zhang et al.~\citep{zhang2007aggregate}.

\subsection{Complexity} \label{sec:complexity}
\shepherding{Next, we analyze the complexity of the predicate-constraint optimization problem. Suppose each predicate-constraint is expressed in Disjunctive Normal Form with at most $p$ clauses. Suppose, we are given a set of $N$ such predicate-constraints. The data complexity of the predicate-constraint optimization problem is the number of computational steps required to solve the optimization problem for a fixed $p$ and a variable $N$. For $p >=2$, we show that the problem is computationally hard in $N$ based on a reduction:}

\begin{proposition}
\label{proposition41}
Determining the maximal sum of a relation constrained by a predicate-constraint is NP-Hard.
\end{proposition}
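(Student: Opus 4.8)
The plan is to establish NP-hardness by a polynomial-time reduction from \textsc{Maximum Independent Set}, whose decision version---``does a graph $G$ have an independent set of size at least $s$?''---is NP-complete. (I read ``a predicate-constraint'' as a predicate-constraint \emph{set} $\mathbf{S}$; a single predicate-constraint is trivial, since its maximal sum is just its value upper bound times its cardinality upper bound.) Given $G=(V,E)$ with $V=\{v_1,\dots,v_m\}$, I would build $\mathbf{S}$ over two numerical attributes: an attribute $a$ that will be summed and an attribute $b$ serving as a vertex label with $\textsf{dom}(b)=\{1,\dots,m\}$. For each vertex $v_i$ include a predicate-constraint $\pi_i=(\psi_i,\nu_i,\kappa_i)$ with predicate $\psi_i \equiv (i\le b\le i)$, value constraint $\nu_i=\{a\in[1,1],\ b\in[1,m]\}$, and frequency constraint $\kappa_i=(0,1)$. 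For each edge $(v_i,v_j)\in E$ include $\pi_{ij}$ with predicate $\psi_{ij}\equiv (i\le b\le i)\vee(j\le b\le j)$, the same value constraint $\nu_{ij}=\{a\in[1,1],\ b\in[1,m]\}$, and frequency constraint $\kappa_{ij}=(0,1)$. The target query is $\textsf{SUM}(a)$. Each predicate is in DNF with at most two clauses, so the construction lives in the $p\ge 2$ regime referenced in the surrounding text.

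Next I would verify the two structural requirements. Closure holds because $\bigvee_{i=1}^{m}\psi_i$ is exactly the condition $b\in\{1,\dots,m\}=\textsf{dom}(b)$, so every tuple of the domain satisfies some $\psi_i$ and the maximum is well defined. All value constraints are identical, so a tuple that activates several overlapping predicates (a row with $b=i$ activates $\pi_i$ and every incident $\pi_{ij}$) is never forced into a value conflict; the only effect of the value constraints is that every admissible row has $a=1$, so that $q(R)=\textsf{SUM}(a)$ over any $R\models\mathbf{S}$ equals $|R|$.

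Then I would prove $\max_{R\models\mathbf{S}} q(R)=\alpha(G)$, the independence number. For one direction, given an independent set $I$, the relation $R_I=\{(1,i): v_i\in I\}$ satisfies every $\pi_i$ (at most one row with $b=i$) and every $\pi_{ij}$ (at most one of $v_i,v_j$ lies in $I$, so at most one row has $b\in\{i,j\}$), and $q(R_I)=|I|$. Conversely, given any $R\models\mathbf{S}$, the vertex frequency constraints force at most one row for each value of $b$, so $R$ determines the set $I=\{v_i:\exists r\in R,\ b(r)=i\}$ with $|R|=|I|$, and the edge frequency constraints force $I$ to contain no edge, i.e.\ $I$ is independent. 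Hence deciding whether the maximal sum is at least $s$ is equivalent to deciding whether $G$ has an independent set of size $s$; the reduction produces $m+|E|$ predicate-constraints of constant size and is computable in polynomial time, so the predicate-constraint maximal-sum problem is NP-hard. (As a sanity check, after cell decomposition $\mathbf{S}$ yields exactly $m$ satisfiable cells, one per vertex, and the resulting \milp is precisely the $0/1$ integer program for maximum independent set on $G$.)

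The main obstacle is not a single deep step but making the encoding respect all of the formal semantics at once: the set must be \emph{closed} (otherwise the maximum is unbounded and the reduction is vacuous), overlapping predicates must not create spurious value conflicts (handled by giving every predicate-constraint one common value constraint), and the DNF clause budget must stay at $p=2$ while still expressing ``these two cells together contain at most one row'' (handled by the two-clause disjunction $\psi_{ij}$). A secondary point worth spelling out is that the reduction is \emph{exact}---the optimum equals $\alpha(G)$ rather than merely bounding it---but this falls out directly from the two directions above, since the forward direction exhibits an explicit witness relation and the backward direction converts an arbitrary optimal relation into an independent set of the same size.
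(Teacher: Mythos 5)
Your proof is correct and uses essentially the same construction as the paper: a reduction from Maximum Independent Set with one predicate-constraint per vertex (frequency at most 1) and one two-clause predicate-constraint per edge (frequency at most 1), so that the cell decomposition aligns with the vertices and the resulting integer program is exactly the maximum-independent-set $0/1$ program. Your write-up is somewhat more careful than the paper's sketch --- it explicitly verifies closure, the absence of value conflicts among overlapping constraints, and both directions of the equivalence $\max_{R\models\mathbf{S}} q(R)=\alpha(G)$ --- but the underlying idea is identical.
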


\begin{proof}[Sketch] We prove this proposition by reduction to the maximal independent set problem, which is NP-Hard. An independent set is a set of vertices in a graph, no two of which are adjacent. We can show that every independent set problem can be described as a predicate-constraint maximal sum problem. Let $G = (V,E)$ be a graph. We can think of this graph as a relational table of ``vertex'' rows. For example: (V1, V2). For each vertex $v \in V$, we define a single Predicate-Constraint with a max value of 1 and a max frequency of 1 $(x = v, [0,1], [0,1])$. For each edge between $v$ and $v'$, we define another predicate constraint that is equality to either of the vertices also with a max frequency of 1: \[(x = v \vee x = v', [0,1], [0,1]).\] This predicate constraint exactly contains the two vertex constraints. Therefore, the cells after the cell-decomposition step perfectly align with those vertices. The optimization problem allocates rows to each of those cells but since the edge constraint has a max frequency is 1 only one of the vertex cells can get an allocation. Since all of the vertex constraints have the same max value, the optimization problem finds the most number of such allocations that are consistent along the edges (no two neighboring vertices both are allocated). This is exactly the definition of a maximal independent set. Since the maximal sum problem is more expressive than the maximal independent set its complexity is greater than NP-Hard in the number of cells.\end{proof}

\shepherding{The hardness of this problem comes from the number of predicate constraints (i.e., data complexity). While, we could analyze the problem for a fixed $N$ and variable $p$ (i.e., query complexity), we find that this result would not be informative and is highly problem specific with very large constant-factors dependent on $N$. }

\subsection{Numerical Example}\label{quant-ex}
\onethree{To understand how this optimization problem works, let's consider a few simple numerical examples using our example dataset. Suppose, we have a query that wants to calculate the total number of sales over a period: }
\begin{lstlisting}
SELECT SUM(price)
FROM Order
WHERE utc >= Nov-11 0:00 AND
      utc <= Nov-13 0:00
\end{lstlisting}
\onethree{For the sake of simplicity, let's assume that all of this data was missing and the only information we have is what is described in the PCs.
Suppose, we define two PCs that describe the price of the least/most expensive items sold on the day, and that between 50 and 100 items were sold:
\[\scriptstyle
t_1: Nov-11 <= utc < Nov-12 \implies 0.99 <= \text{price} <= 129.99  ,~ (50, 100)
\]
\[\scriptstyle
t_2: Nov-12 <= utc < Nov-13 \implies 0.99 <= \text{price} <= 149.99  ,~ (50, 100)
\]
Since the PCs are disjoint, we can trivially compute the result range for the total sales:
\[
[50\times0.99+50\times0.99, 100\times129.99 + 100\times149.99]
\]
\[
 = [99.00, 27998.00] \blacksquare
\]
}
\onethree{\noindent Now, suppose, we had overlapping PCs. This makes the solution much harder to manually reason about since we have to account for the interaction between the constraints. 
\[\scriptstyle
t_1: Nov-11 <= utc < Nov-12 \implies 0.99 <= \text{price} <= 129.99  ,~ (50, 100)
\]
\[\scriptstyle
t_2: Nov-11 <= utc < Nov-13 \implies 0.99 <= \text{price} <= 149.99  ,~ (75, 125)
\]
This requires a cell decomposition to solve. There are 3 possible cells:
\[\scriptstyle
c_1 := (Nov-11 <= utc < Nov-12) \wedge (Nov-11 <= utc < Nov-13)
\]
\[\scriptstyle
c_2 := \neg (Nov-11 <= utc < Nov-12) \wedge (Nov-11 <= utc < Nov-13)
\]
\[\scriptstyle
c_3 :=  (Nov-11 <= utc < Nov-12) \wedge \neg (Nov-11 <= utc < Nov-13)
\]
$c_3$ is clearly not satisfiable, so we can discount this cell. 
Then, we need to figure out how much to allocate to $c_1$ and $c_2$.
The lower bound can be achieved by an allocation of $50$ tuples to $c_1$ and $25$ to $c_2$.
The upper bound can be achieved by an allocation of $50$ tuples to $c_1$ and $75$ tuples to $c_2$. Note that the optimal allocation does not maximize tuples in $c_1$.
\[
[50\times0.99+25\times0.99, 50\times129.99 + 75\times149.99] =
\]
\[
[74.25, 17748.75] \blacksquare
\]
}

\section{Joins Over Predicate Constraints}
\label{sec:joins}
In the previous section, we considered converting single table predicate-constraints into \cis for a query result. In this section, we extend this model to consider aggregate queries with inner join conditions, namely, there are predicate constraints describing missing data in each of the base tables and we have to understand how these constraints combine across tables.

\subsection{Naive Method}
One way to handle multi-relation predicate constraints is to treat a join as a Cartesian product. 
Let's consider two tables $R$ and $S$, and let $\mathcal{P}_R$ and $\mathcal{P}_S$ denote their predicate constraints sets respectively.
For two predicate constraints from each set $\pi_s$ and $\pi_r$, let's define a direct-product operation as:
\[
\pi_s \times \pi_r = (\psi_s \wedge \psi_r,[\nu_s \nu_r], \kappa_s \otimes \kappa_r)
\]
that takes a conjunction of their predicates, concatenates their attribute ranges, and multiplies their cardinalities.
If we do this for all pairs of predicate constraints, we can derive a set of constraints that account for the join:
\[\mathcal{P} = \{\pi_i \times \pi_j : \forall \pi_i, \pi_j \in \mathcal{P}_R, \mathcal{P}_T \}\]
This approach will produce a bound for all inner-joins since any satisfying tuple in the output has to be satisfying either $\mathcal{P}_R$ or $\mathcal{P}_T$.

While this approach will produce a bound, it may be very loose in certain cases.
It is particularly loose in the case of inner equality joins.
Most obviously, it does not consider the effects of equality conditions and how those may affect cardinalities and ranges.
For conditions that span more than 2 relations, we run into another interesting problem.
The Worst-Case Optimal Join (WCOJ) results are some of the most important results in modern database theory~\cite[Lemma 3.3]{WCOJ}. Informally, they show that solving an n-way join with a cascade of two-way joins can create exponentially more work due to very large intermediate results.
An optimal algorithm would only do work proportional to the number of output rows and no more.

A very similar issue arises with this bounding problem.
Consider the exemplary triangle counting query:
 \[q = |R(a,b)S(b,c)T(c,a)|\]
Suppose, each relation has a size of $N$.
If we apply the naive technique to bound $q$ for a predicate-constraint set defined for each relation, the bound would be $O(N^{3})$.
However, from WCOJ results, we know that the maximum value of q is $O(N^{\frac{3}{2}})$~\cite{ngo2018worst}. 
We can perpetuate this logic to the 4-clique counting query, 5-clique, and so on:
 \[q = |R(a,b,c)S(b,c,d)T(c,d,e)U(e,a,b)|\] and the gap between the theoretical bound and the one computed by our framework grows exponentially.

\subsection{A Better Bound For Natural Joins}
We can leverage some of the theoretical machinery used to analyze WCOJ algorithms to produce a tighter bound. We first introduce an important result for this problem, Friedgut's  Generalized Weighted Entropy (GWE) inequality~\cite[Lemma 3.3]{WCOJ}.

There are $r$ relations, indexed by $i$ as $R_i$. The joined relation is $R$. For each row $t$ in $R$, its projection in the relation $R_i$ is $t^i$. For each relation $R_i$, this is an arbitrary non-negative weight function $w_i(\cdot)$ defined on all rows from $R_i$. Fractional edge cover (FEC) is a vector $c$ that assigns a non-negative value $c_i$ to each relation $R_i$, and also satisfies that for each attribute $s$, \[\sum_{R_i\oplus{}s}c_i\geq1\] $R_i\oplus{}s$ if the relation $R_i$ contains attribute $s$ (when multiple relations join on one attribute, we consider the attribute indistinguishable, i.e., they all contain the same attribute).

When $c$ is FEC, GWE states
\[\sum_{t\in{}R}\prod_{i}\big(w_i(t^i)\big)^{c_i} \leq \prod_i\big(\sum_{t_i\in{}R_i}w_i(t_i)\big)^{c_i} \tag{*}\]
What GWE implies depends on the choice of weight functions $w_i(\cdot)$. In a query with \texttt{SUM(A)} aggregation, without loss of generality we assume attribute $A$ comes from the relation $R_a$. Let $w_a(t_a)=t_a.A$, and $w_i(t_i)=1$ for $i\neq{}a$. Then the left hand of $(*)$ becomes \[\sum_{t\in{}R}w_a(t^a)^{c_a}=\sum_{t\in{}R}t.A^{c_a}\] and the right hand of $(*)$ becomes
\[\prod_{i\neq{}a}|R_i|^{c_i}\times\big(\sum_{t_a\in{}R_a}t_a.A\big)^{c_a}\] We only consider the FEC $c$ such that $c_a=1$, then (*) becomes 
\[\sum_{t\in{}R}t.A \leq \prod_{i\neq{}a}|R_i|^{c_i}\times\big(\sum_{t_a\in{}R_a}t_a.A\big) \tag{**}\]

Now given a set of PC $\pi_i$ for each relation $R_i$, and we only consider those $R_i$ that conform to corresponding PCs. According to $(**)$, we have 
\[\underset{\substack{t_1,\dots,t_r\text{~natural join}\\ R_i \models \pi_i}}{\text{\texttt{SUM}}(A)} \leq \underset{ R_a\models{}\pi_a}{\text{\texttt{SUM}}(A)} \times \prod_{i\neq{}a}\big(\underset{ R_i\models{}\pi_i}{\text{\texttt{COUNT}}(*)}\big)^{c_i}   \]
Here the left hand side is our expected result, and the right hand side can be solved on each relation individually using the approaches discussed in the previous section.

Note that we use the solution to the FEC problem to compute $c$, and this solution derives an upper bound of the left-hand side. In order to get the tightest upper bound, we consider an optimization problem: minimize the right-hand side subject to that $c$ is an FEC. We take a $\log$ of the target function (i.e., right-hand side) so both the target function and constraints are in linear form. The optimization problem becomes a linear programming problem, which can be solved by a standard linear programming solver. See Section 6.4 for two cases where this solution significantly improves on the naive Cartesian product solution described above.

\section{Results}
Now, we evaluate the PC framework in terms of accuracy and efficiency at modeling missing data rows.

\subsection{Experimental Setup}\label{exp-setup}
We set up each experiment as follows: (1) summarize a dataset with \reviewthree{$n$} Predicate-Constraints, (2) each competitor framework gets a similar amount of information about the dataset (e.g., a statistical model with $O(n)$ statistical parameters), (3) we estimate a query result using each framework with calibrated error bounds. 
Our comparative evaluation focuses on SUM, COUNT queries as those are what the baselines support.
We show similar results on MIN, MAX, AVG queries, but only within our framework. 

\reviewone{It is important to note that each of these frameworks will return a confidence interval and not a single result. Thus, we have to measure two quantities: 
the \textbf{failure rate} (how often the true result is outside the interval) and the tightness of the estimated ranges. 
We measure tightness by the ratio between the upper bound and the result (we denote this as the \textbf{over estimation rate}). A ratio closer to $1$ is better on this metric, but is only meaningful if the failure rate is low.}

\subsubsection{Sampling}\label{sampling}\reviewtwo{
To construct a sampling baseline, we assume that the user provides actual unbiased example missing data records.
While this might be arguably much more difficult for a user than simply describing the attribute ranges as in a predicate constraint, we still use this baseline as a measure of accuracy. In our estimates, we use \emph{only} these examples to extrapolate a range of values that the missing rows could take.} 

\vspace{0.25em} \noindent \textbf{Uniform Sampling} We randomly draw $n$ samples (US-1) and 10$n$ samples (US-10) from the set of missing rows.

\vspace{0.25em} \noindent \textbf{Stratified Sampling} We also use a stratified sampling method which performs a weighted sampling from partitions defined by the PCs that we use for a given problem. Similarly, we denote $n$ samples as (ST-1) and 10$n$ samples (ST-10).

\reviewtwo{The goal is to estimate the result range using a statistical confidence interval. Commonly, approximate query processing uses the Central Limit Theorem to derive bounds.  These confidence intervals are parametric as they assume that the error in estimation is Normally distributed. Alternatively, one could use a non-parametric method, which does not make the Normal assumption, to estimate confidence intervals like those described in~\cite{hellerstein1997online} (we use this formula unless otherwise noted). We denote confidence interval schemes in the following way: US-1p (1x sample using a parametric confidence interval), US-10n (10x sample using a non-parametric confidence interval), ST-10p (10x stratified sample using a parametric confidence interval), etc.}

\subsubsection{Generative Model}
Another approach is to fit a generative model to the missing data. We use the whole dataset as training data for a Gaussian Mixture Model (GMM). The trained GMM is used to generate the missing data. A query result that is evaluated on the generated data is returned. This is simulating a scenario where there is a generative model that can describe what data is missing. If we run this process several times, we can determine a range of likely values that the query could take.

\subsubsection{Equiwidth Histogram} We build a histogram on all of the missing data on the aggregate attribute with $N$ buckets and use it to answer queries. We use standard independent assumptions to handle queries across multiple attributes.

\subsubsection{PCs}
The accuracy of the PC framework is dependent on the particular PCs that are used for a given task.
In our ``macro-benchmarks'', we try to rule out the effects of overly tuned PCs.
We consider two general schemes: \textbf{Corr-PC}, even partitions of attributes correlated with the aggregate of interest, and \textbf{Rand-PC}, randomly generated PCs.

For \textbf{Corr-PC}, we identify the (other than the aggregation attribute) most correlated attributes with the aggregate to partition on. 
We divide the combined space into equi-cardinality buckets where each partition contains roughly the same number of tuples.
We omit the details of this scheme for the sake of brevity.
For \textbf{Rand-PC}, we generate random overlapping predicate constraints over the same attributes (not necessarily equi-cardinality).
We take extra care to ensure they adequately cover the space to be able to answer the desired queries.
We see these as two extremes: the reasonably best performance one could expect out of the PC framework and the worst performance. 
We envision that natural use cases where a user is manually defining PCs will be somewhere in the middle in terms of performance.

\subsection{Intel Wireless Dataset}\label{sec:missing_ratio}
The Intel wireless dataset \cite{intelwireless} contains data collected from 54 sensors deployed in the Intel Berkeley Research lab in 2004. This dataset contains \reviewtwo{3 million rows}, 8 columns including different measurements like humidity, temperature, light, voltage as well as date and time of each record. We consider aggregation queries over the \texttt{light} attribute. For this dataset,  Corr-PC is defined as $n=2000$ predicate constraints over the attributes \texttt{device\_id} and \texttt{time}. Rand-PC defines random PCs over those same attributes.
Missing rows are generated from the dataset in a correlated way---removing those rows maximum values of the \texttt{light} attribute.

\reviewone{We compare Corr-PC and Rand-PC with 3 baselines we mentioned in Section \ref{exp-setup}: US-1n, ST-1n, and Histogram. We vary the fraction of missing rows $r$ and evaluate the accuracy and failure rate of each technique. Figure \ref{intel-mr-cnt} and Figure \ref{intel-mr-sum} illustrate the experimental results: (1) as per our formal guarantees, both Corr-PC, Rand-PC (and Histograms) do not fail if they have accurate constraints, (2) despite the hard guarantee, the confidence intervals are not too ``loose'' and are competitive or better than those produced by a 99.99\% interval, (3) informed PCs are an order of magnitude more accurate than randomly generated ones. There are a couple of trends that are worth noting. First, the failure rate for sampling techniques on the SUM queries is higher than the expected 1 in 10000 failures stipulated by the confidence intervals.
In this missing data setting, a small number of example rows fail to accurately capture the ``spread'' of a distribution (the extremal values), which govern failures in estimation.
Queries that require values like SUM and AVG are very sensitive to these extrema.}

\reviewone{It is important to note that these experiments are idealized in the sense that all the baselines get true information about the missing data and have to summarize this information into $O(n)$ space and measure how useful that stored information is for computing the minimal and maximal value a workload of aggregate queries could take. There is a subtle experimental point to note. If a query is fully covered by the missing data, we solve the query with each baseline and record the results; if a query is partially covered by the missing data, we solve the part that is missing with each baseline then combine the result with a `partial ground truth' that is derived from the existing data; finally, if a query is not overlapping with the missing data then we can get an accurate answer for such a query. Such issues are common to all the baselines, in the upcoming experiments we will only consider the accuracy of representing the missing data (and not partially missing data) to simplify questions of correlation and accuracy.}

\begin{figure}[ht]
    \centering
    \includegraphics[width=1\linewidth]{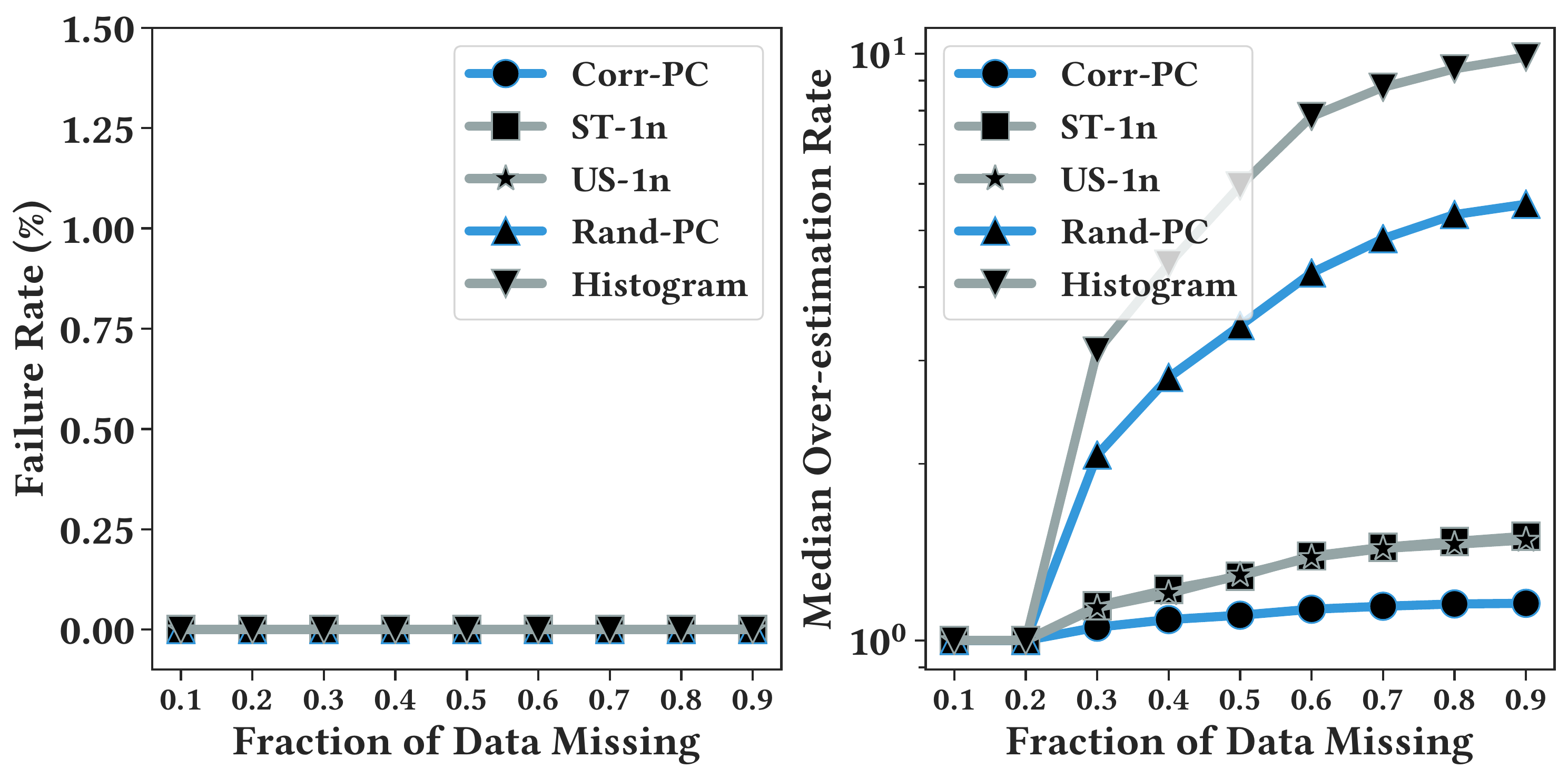}
    \caption{\changed{Performance of baselines given different missing ratio, evaluated with 1000 COUNT(*) query on the Intel Wireless dataset.}}
    \label{intel-mr-cnt}
\end{figure}

\begin{figure}[h]
    \centering
    \includegraphics[width=1\linewidth]{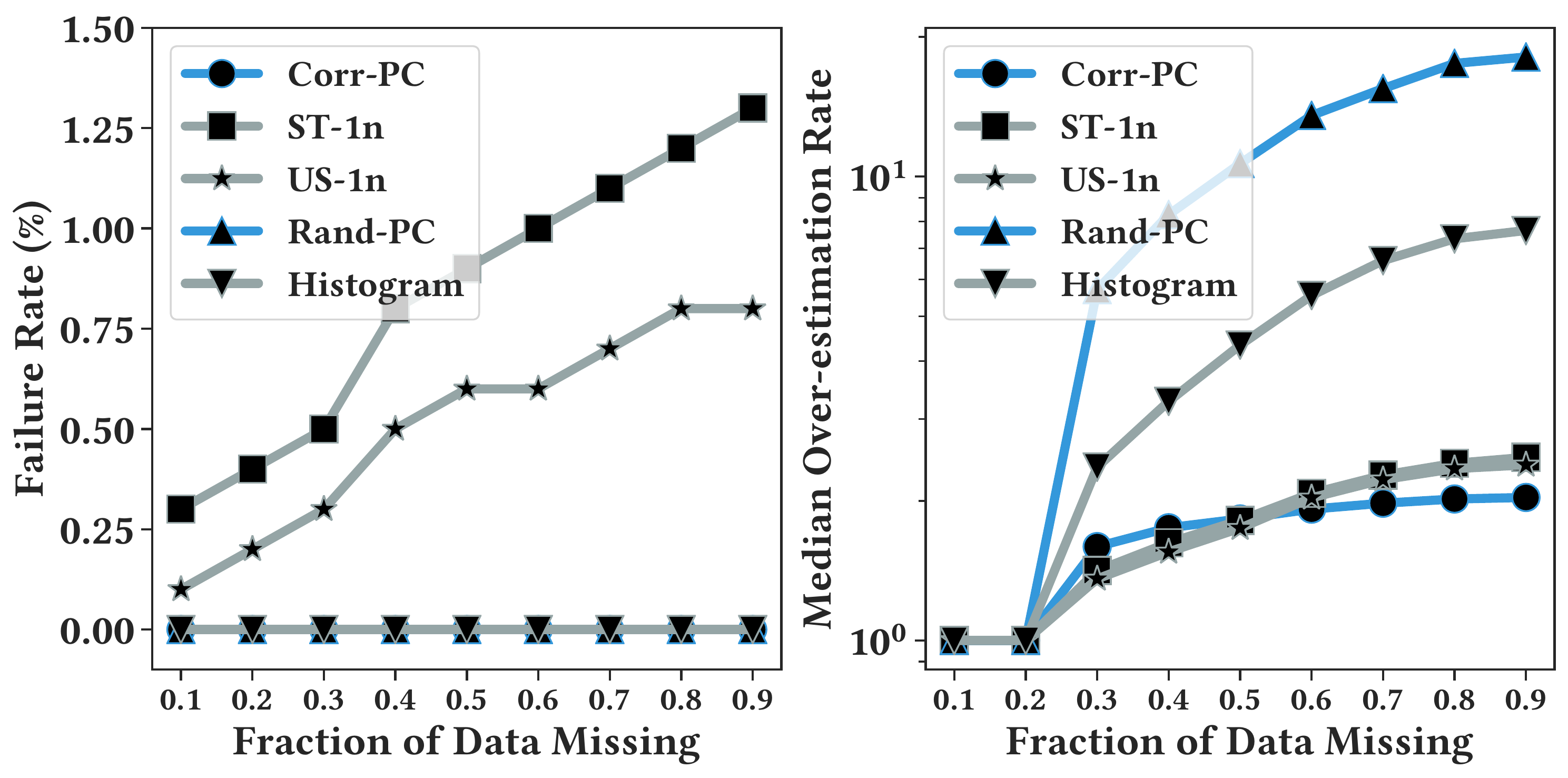}
    \caption{\changed{Performance of baselines given different missing ratio, evaluated with 1000 SUM query on the Intel Wireless dataset.}}
    \label{intel-mr-sum}
\end{figure}

\subsection{\changed{Detailed Sampling Comparison}}
\label{sec:fr-tradeoff}
\changed{One might argue that $99.99\%$ is an overly conservative confidence interval. In this experiment, we evaluate the performance of the uniform sampling baseline in terms of failure rate and accuracy as function of that confidence interval setting. Our results show there is not a clear way to calibrate the confidence intervals to either minimize failures or avoid inaccuracy. Results in Table \ref{clt-fr-median} show a clear trade-off between failure rate and accuracy: when we increase the confidence interval, the over-estimation ratio increases, and the failure rate decreases. However, even with a 99.99$\%$ confidence interval derived from Chernoff Bound, the failure rate is non-zero. We believe that the PC framework provides the user with a competitive accuracy but the guarantee of no failures.}

\begin{table}[ht]
    \begin{center}
    \begin{tabular}{llllllll}
    \hline
    \multicolumn{8}{c}{Failure Rate $\%$}\\
    \hline
    Conf (\%)        & 80   & 85   & 90   & 95   & 99  & 99.9 & 99.99 \\ \hline
    US-1n     & 20.1 & 15.6 & 11.4 & 6.9  & 3.4 & 2.4  & 0.8   \\
    Corr-PC & \multicolumn{7}{c}{--0--}        \\ 
    \hline
    \multicolumn{8}{c}{Over Estimation Rate}\\
    \hline
    US-1n     & 1.07 & 1.08 & 1.11 & 1.13 & 1.2 & 1.27 & 3.13  \\
    Corr-PC & \multicolumn{7}{c}{--2.23--}  \\ \hline
    \end{tabular}
    \end{center}
    \caption{\changed{Trade-off between failure rate and accuracy of an uniform sampling baseline given different confidence interval vs. Corr-PC.}}
    \label{clt-fr-median}
\end{table}

\subsubsection{Sampling More Data}\label{exp-data}
In all of our experiments above, we consider a 1x random sample. Where the baseline is given the same amount of data compared to the number of PCs.
PCs are clearly a more accurate estimate in the ``small data'' regime, what if the sample size was larger.
In Figure \ref{clt-chernoff}, we use the Intel Wireless dataset to demonstrate the performance of the non-parametric bounds using different sample sizes. And the results demonstrate a clear trend of convergence as we increase the sample size.
If we consider data parity (1x), the confidence interval is significantly less accurate than a well-designed PC.
One requires 10x the amount of data to cross over in terms of accuracy.
We envision that PCs will be designed by a data analyst by hand, and thus the key challenge is to evaluate the accuracy of the estimation with limited information about the distribution of missing values.

\begin{figure}[h]
    \centering
    \includegraphics[width=1\linewidth]{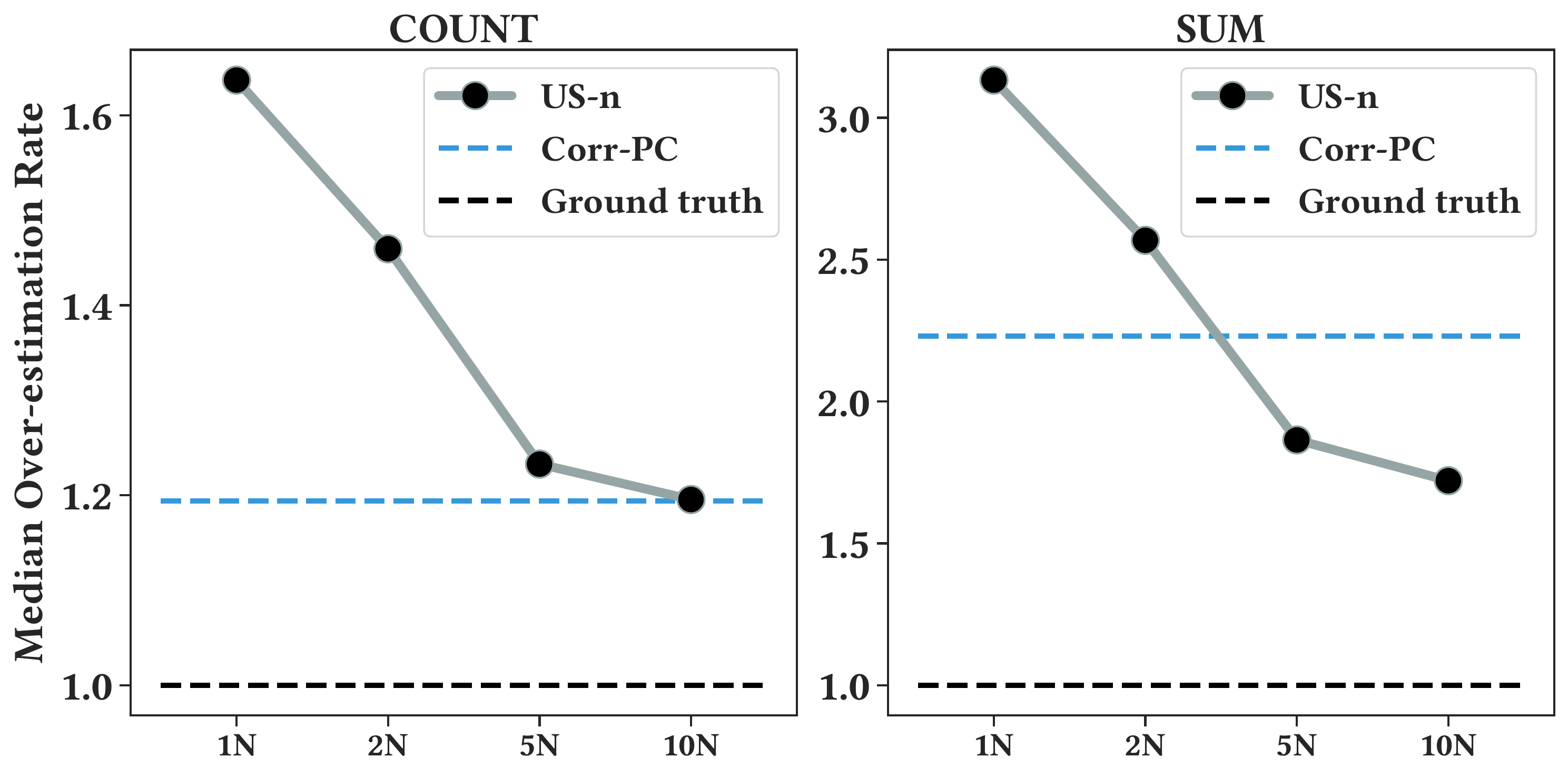}
    \caption{\changed{Performance of the uniform sampling baseline with different sample size.}}
    \label{clt-chernoff}
\end{figure}

\subsubsection{Robustness To Noise}\label{exp-robustness}
Of course, these results depend on receiving PCs that accurately model the missing data.
We introduce noise into the PCs to understand how incorrectly defined PCs affect the estimated ranges.
When the PCs are noisy, there are failures because the ranges could be incorrect. 
\reviewthree{We add independent noise to the minimum and maximum values for each attribute in each PCs.}
Figure \ref{noisy-partition} plots the results for Corr-PC, a set of 10 overlapping PCs (Overlapping-PC), and US-10n (a 10x sample from the previous experiment).
We corrupt the sampling bound by mis-estimating the spread of values (which is functionally equivalent to an inaccurate PC).
All experiments are on the SUM query for the Intel Wireless dataset as in our previous experiment.
For corrupting noise that is drawn from a Gaussian distribution of 1, 2 and 3 standard deviation, we plot the failure rate. 

Our results show that PCs are not any more sensitive than statistical baselines.
In fact, US-10n has the greatest increase in failures due to the noise. Corr-PC is significantly more robust. 
This experiment illustrates the benefits of overlapping PCs. 
When one such overlapping PC is incorrect, our framework automatically applies the most restrictive overlapping component.
This allows the framework to reject some amount of mis-specification errors.

\begin{figure}[h]
    \centering
    \includegraphics[width=1\linewidth]{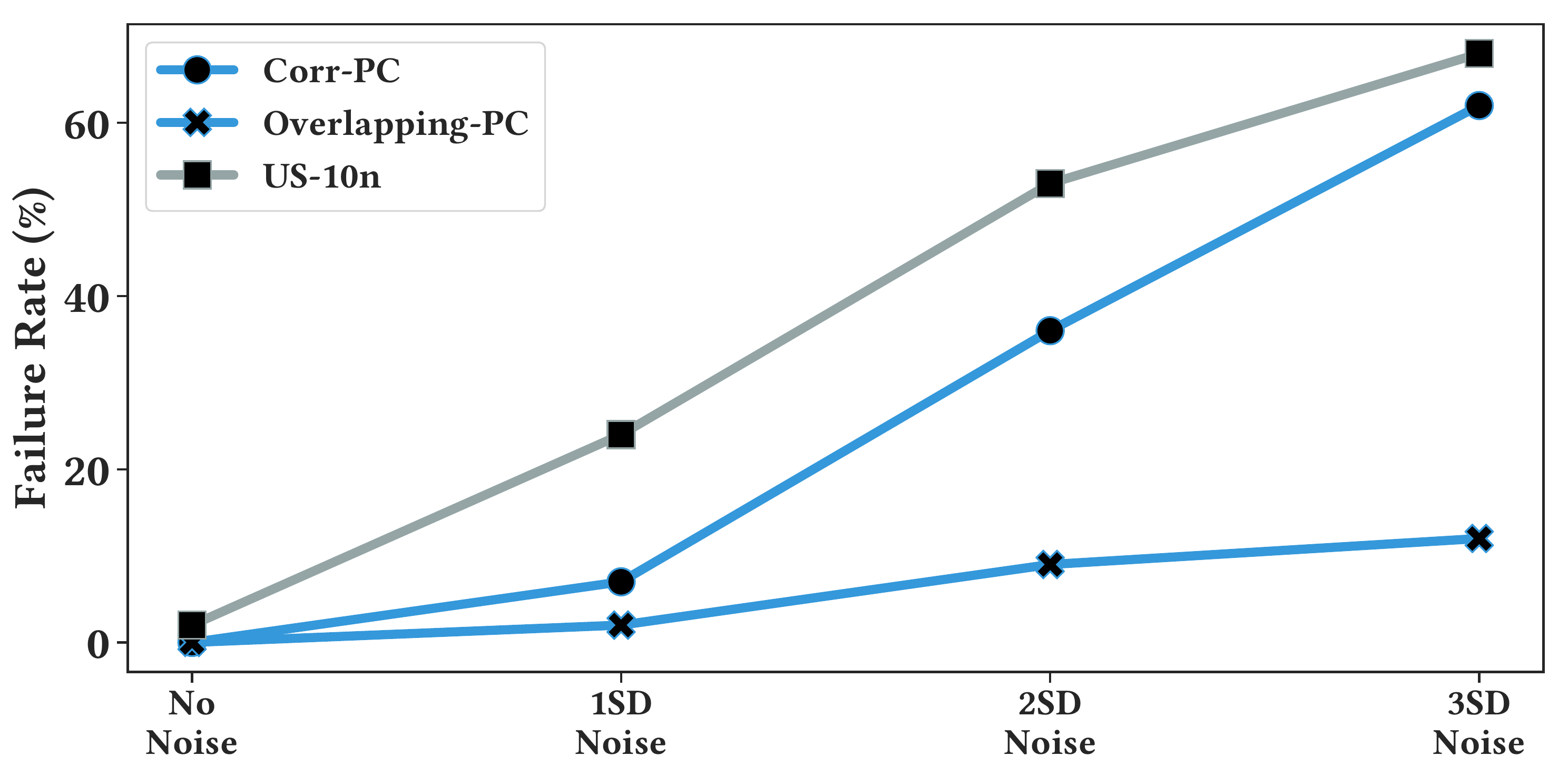}
    \caption{We investigate the sensitivity of Corr-PC, Overlapping-PC, and US-10n to different levels of noise. The failure rate of all approaches increases as we increase the noise level, but the PC baselines especially Overlapping-PC are more tolerable to the same level of noise.}
    \label{noisy-partition}
\end{figure}

\subsection{Scalability}\label{exp-scaling}
In this subsection, we evaluate the scalability of PCs and optimization techniques. As mentioned in earlier sections, the complete process of solving a PC problem including two parts. First, we need to perform cell decomposition to find out which cells are valid and second, we need to formalize a MILP problem with the valid cells that can be solved to find the optimal bound. 
The vanilla version described above is intractable because the number of sub-problems we need to solve in the cell decomposition phase is exponential to the number of PCs.

We presented a number of optimizations in the paper to improve this time.
We will show that naive processing of the PCs leads to impractical running time.
We generate 20 random PCs that are very significantly overlapping.
Figure \ref{dfs_tuning} plots the number of cells evaluated as well as the run time of the process.
The naive algorithm evaluates the SAT solver on more than 1000x more cells than our optimized approach.

Since cell decomposition is really the most expensive step
We can prune and save about 99.9\% of the solving time by using DFS (early termination for cell decomposition) and the rewriting heuristic.
Without these optimizations, PCs are simply impractical at large scales. 

\begin{figure}[t]
    \centering
    \includegraphics[width=0.5\linewidth]{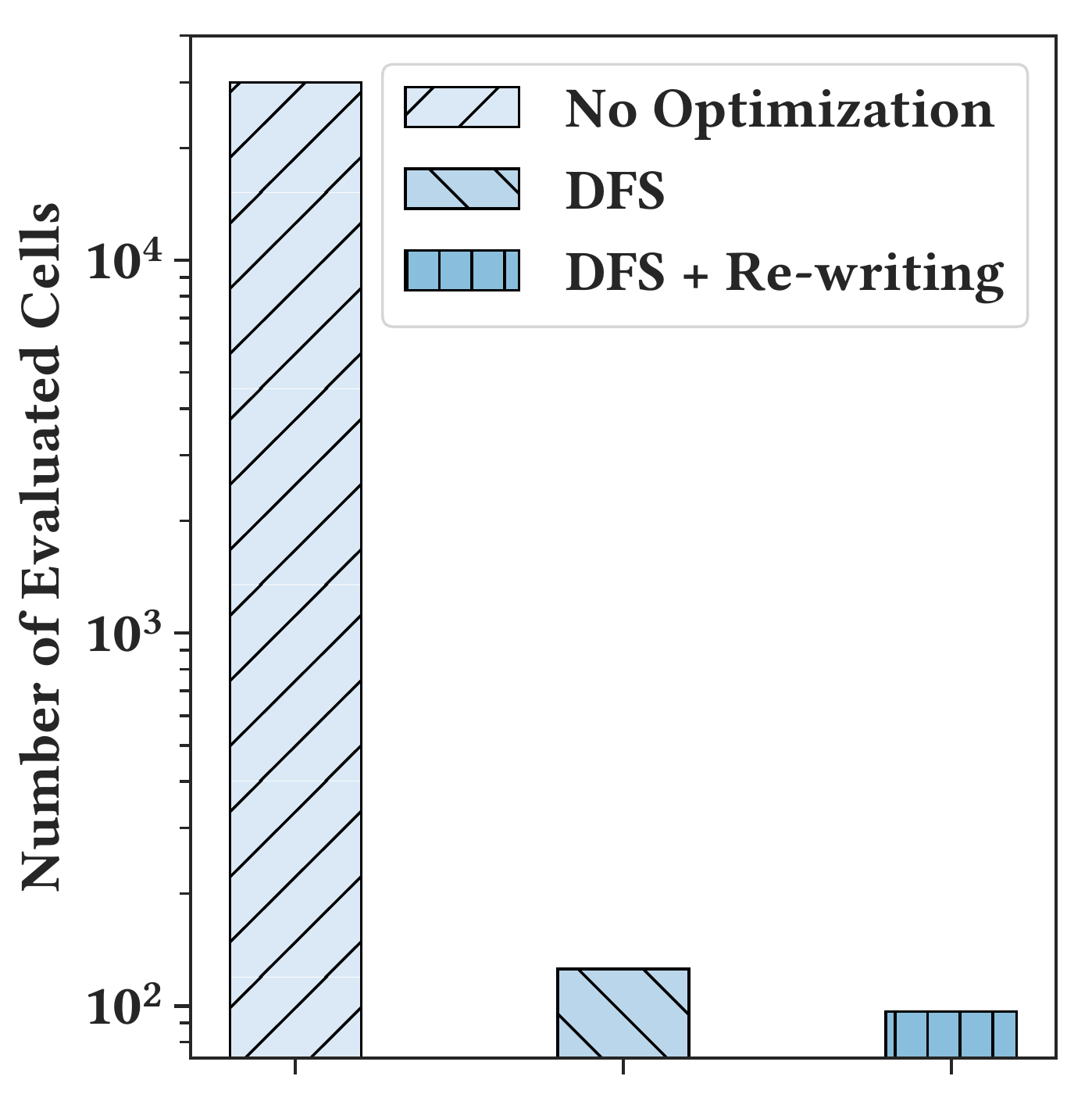}
    \caption{Our optimizations reduce the number of cells evaluated during the cell decomposition phase by over a 1000x.}
    \label{dfs_tuning}
\end{figure}

\subsubsection{Non-Overlapping PCs Scale Effectively}
PCs can be solved significantly faster in the special case of partitioned PCs (non-overlapping).
The process of answering a query with PC partitions is much simpler than using overlapping PCs. Because partitions are disjoint with each other, we can skip the cell decomposition, and the optimization problem can be solved by a greedy algorithm. As shown in Figure \ref{partition-scalability}, the average time cost to solve one query with a partition of size 2000 is 50ms, and the time cost is linear to the partition size. 
We can scale up to 1000s of PCs in this case.

\begin{figure}[t]
    \centering
    \includegraphics[width=1\linewidth]{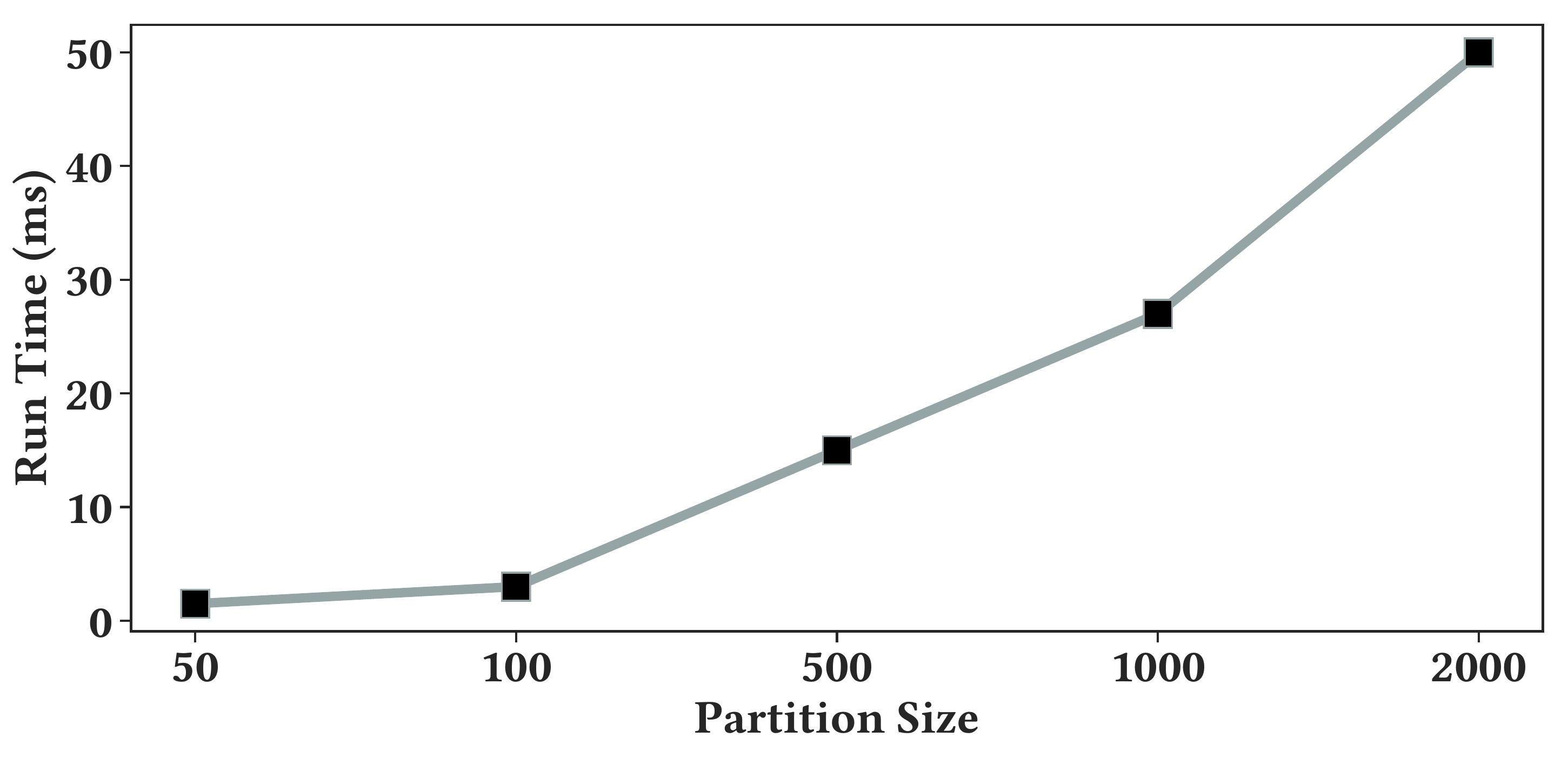}
    \caption{The run time needed to solve one query using partition of different size.}
    \label{partition-scalability}
\end{figure}


\subsection{Handling MIN, MAX, and AVG Queries}
As mentioned in earlier sections, besides COUNT and SUM queries, PCs can also handle MIN, MAX and AVG queries. In this experiment, we use the Intel Wireless dataset for demonstration, we partition the dataset on \texttt{DeviceID} and \texttt{Time}. For each type of query, we randomly generate 1000 queries and use PC to solve them. Results can be found in Figure \ref{mma}.
First, note how PC can always generate the optimal bound for MIN and MAX queries.
PCs are a very good representation of the spread of the data, more so than a sample.

We show similar performance for AVG queries to the COUNT and SUM queries studied before.
AVG queries are an interesting case that we chose not to focus on.
While sampling is a very accurate estimate of AVG queries without predicates, with predicates the story becomes more complicated.
Since averages are normalized by the number of records that satisfy the predicate, you get a ``ratio of estimators'' problem and the estimate is not exactly unbiased.
So for small sample sizes, standard bounding approaches can have a high failure rate despite seemingly accurate average-case performance.

\begin{figure}[t]
    \centering
    \includegraphics[width=0.5\linewidth]{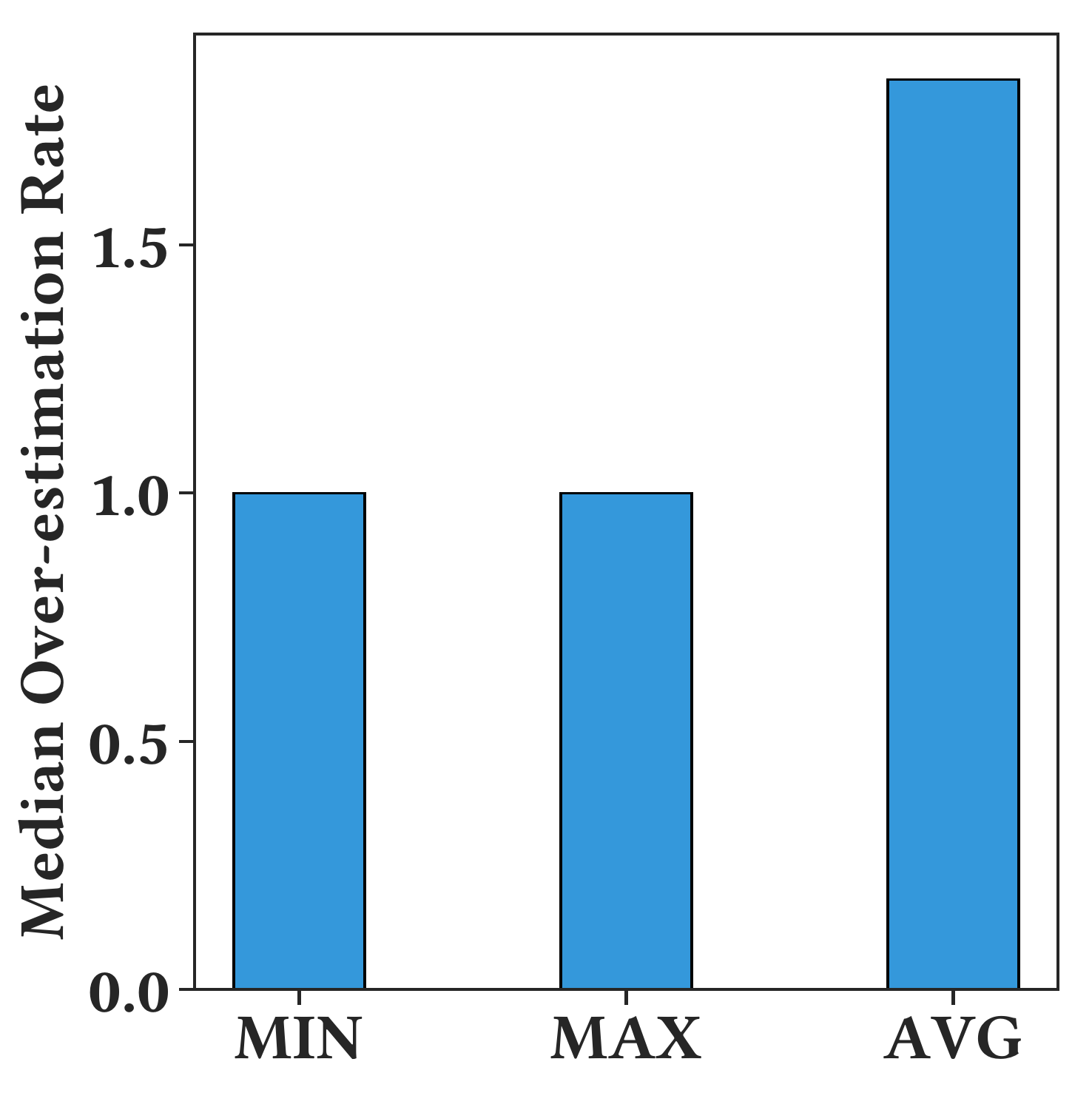}
    \caption{With PC, an optimal bound can be derived for MIN and MAX queries. PC also generates competitive result for AVG Queries.}
    \label{mma}
\end{figure}

\subsection{Additional Datasets}\label{exp-other}
We evaluate the accuracy of the framework using two other datasets. 

\subsubsection{Airbnb at New York City Dataset}
The Airbnb dataset \cite{airbnbnyc} contains open data from Airbnb listings and metrics in the city of New York, 2019. This dataset contains \reviewtwo{50 thousand rows}, 19 columns that describe different properties of listings like location (latitude, longitude), type of room, price, number of reviews, etc. Corr-PC and Rand-PC are defined as $n=1500$ constraints over \texttt{latitude} and \texttt{longitude}.

Figure \ref{nyab-lat-long} replicates the same experiment on a different dataset. This dataset is significantly skewed compared to the Intel Wireless dataset, so the estimates are naturally harder to produce.
As before, we find that well-designed PCs are just as tight as sampling-based bounds.
However, randomly chosen PCs are significantly looser (more than 10x).
PCs fail conservatively, a loose bound is still a bound, it might just not be that informative.
In skewed data such as this one, we advise that users design simple PCs that are more similar to histograms (partition the skewed attribute).

\begin{figure}[t]
    \centering
    \includegraphics[width=1\linewidth]{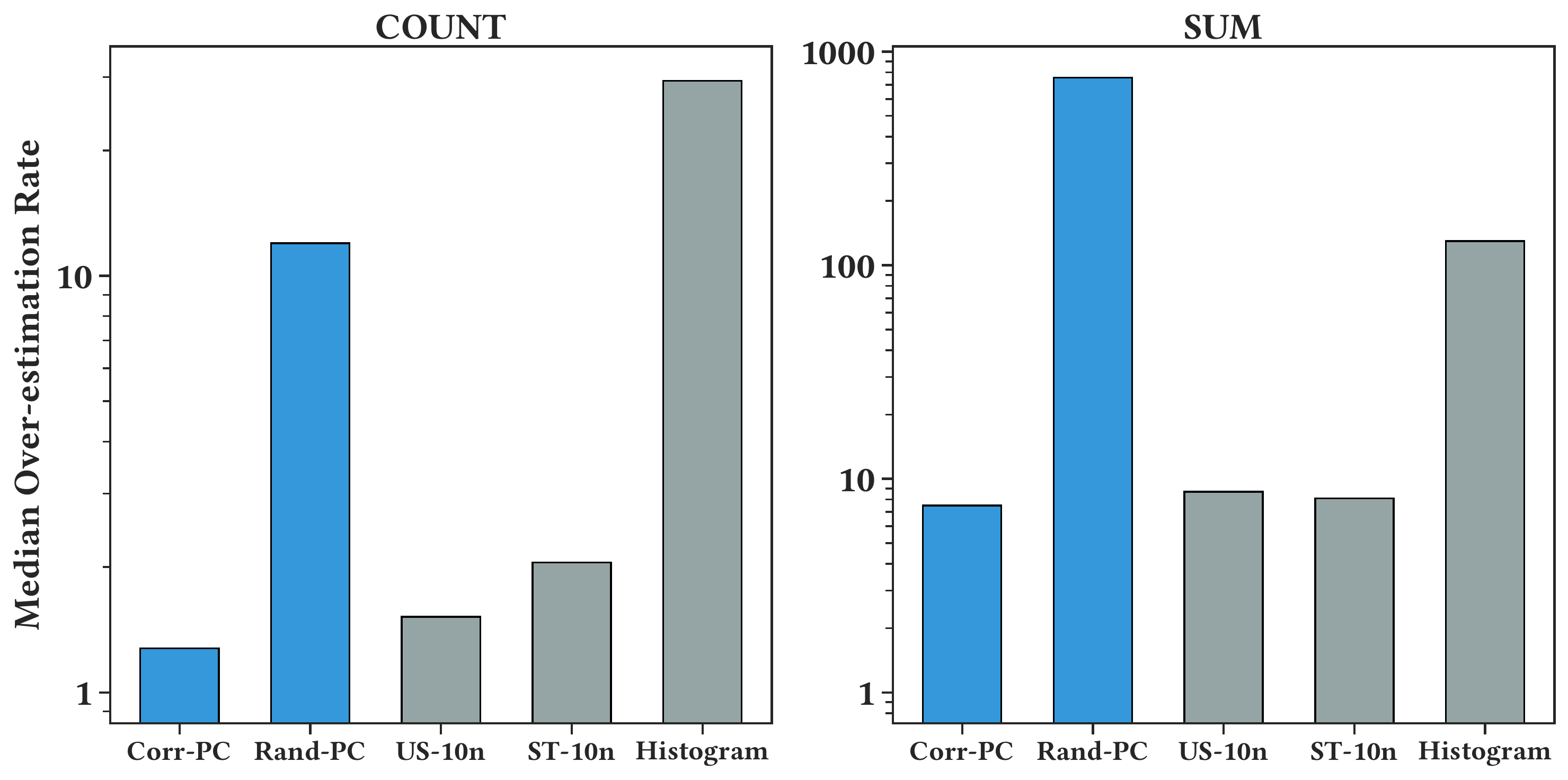}
    \caption{Baseline performance on \changed{1000} COUNT(*) and SUM queries with predicate attributes on \textit{Latitude} and \textit{Longitude} using the Airbnb NYC dataset.}
    \label{nyab-lat-long}
\end{figure}


\subsubsection{Border Crossing Dataset}
The Border Crossing dataset \cite{bordercrossing} from The Bureau of Transportation Statistics (BTS) summary statistics for inbound crossings at the U.S.-Canada and the U.S.-Mexico border at the port level. This dataset contains \reviewtwo{300 thousand rows}, 8 columns that describe the summary of border crossing (the type of vehicles and the count) that happen at a port (port code, port location, state, etc) on a specific date. We compare the hard bound baselines with PCs using different partitions with three groups of randomly generated queries.  Corr-PC and Rand-PC are defined as $n=1600$ constraints over \texttt{port} and \texttt{date}.

Results in Figure \ref{border-port-date} show results on another skewed dataset. As before, informed PCs are very accurate (in fact more accurate than sampling). Randomly chosen PCs over-estimate the result range by about 10x compared to the other approaches.
Again, the advantage of the PC framework is that unless the assumptions are violated, there are no random failures.
On this dataset, over 1000 queries, we observed one bound failure for the sampling approach.
This failure is included in the results.

\begin{figure}[t]
    \centering
    \includegraphics[width=1\linewidth]{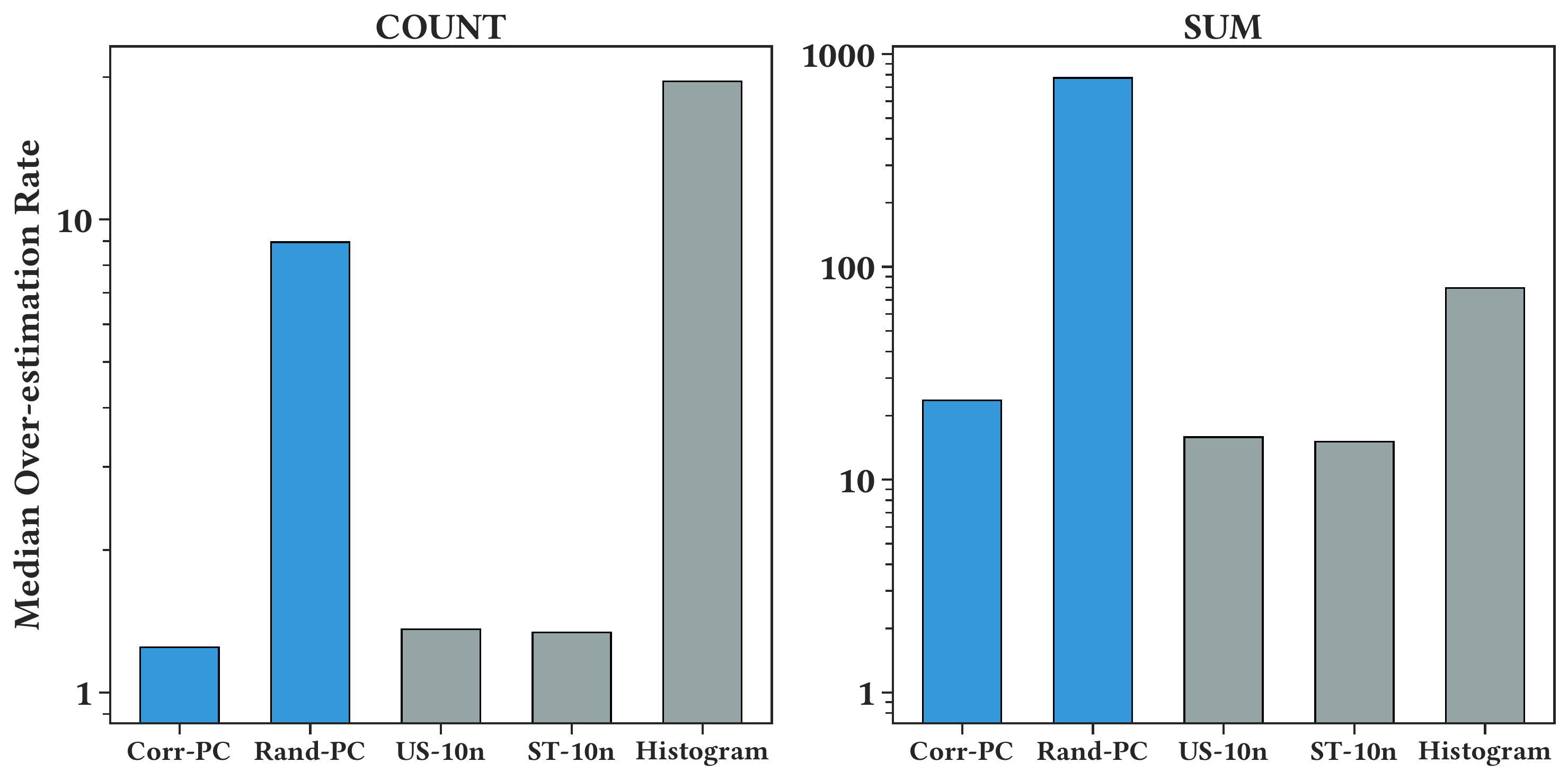}
    \caption{Baseline performance on \changed{1000} COUNT(*) and SUM queries with predicate attributes on \textit{Port} and \textit{Date} using the Border crossing dataset.}
    \label{border-port-date}
\end{figure}

\subsubsection{Join Datasets}
We also evaluate the PC framework on a number of synthetic join examples on randomly generated data.
The statistical approaches do not generalize well to estimates for queries with inner equal joins, and we found the bounds produced were too fallible for meaningful comparison.
To evaluate PCs on such queries, we compare to another class of bounding techniques that have been proposed in the privacy literature.
These bounds estimate how much a query might change for a single hypothetical point update.
Our insight connecting the bounding problem to worst-case optimal join results leads to far tighter bounds in those settings.
Johnson et~al. \cite{johnson2018towards} proposed a technique named elastic sensitivity that can bound the maximum difference between the query's result on two instances of a database. 

\noindent \emph{Counting Triangles.}
In this example, which is also studied by Johnson et al. \cite{johnson2018towards}, we analyze a query that is used to count triangles in a directed graph. 
 In Figure \ref{triangle} (TOP), we show the results of the two approaches on the counting triangle problem using randomly populated $edges$ tables of different sizes. And the results confirm that our approach drives a bound that is much tighter in this case---in fact by multiple orders of magnitude.

\begin{figure}[t]
    \centering
    \includegraphics[width=1\linewidth]{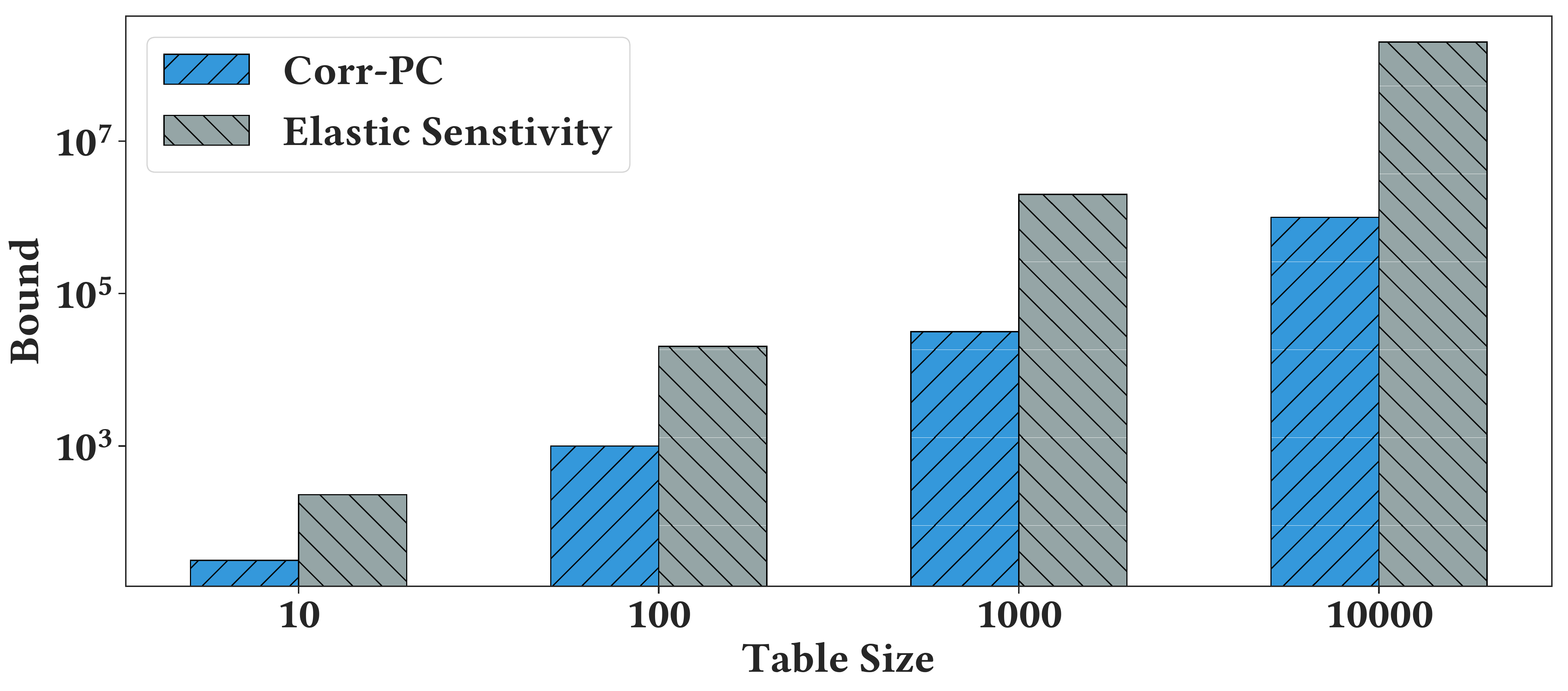}
     \includegraphics[width=1\linewidth]{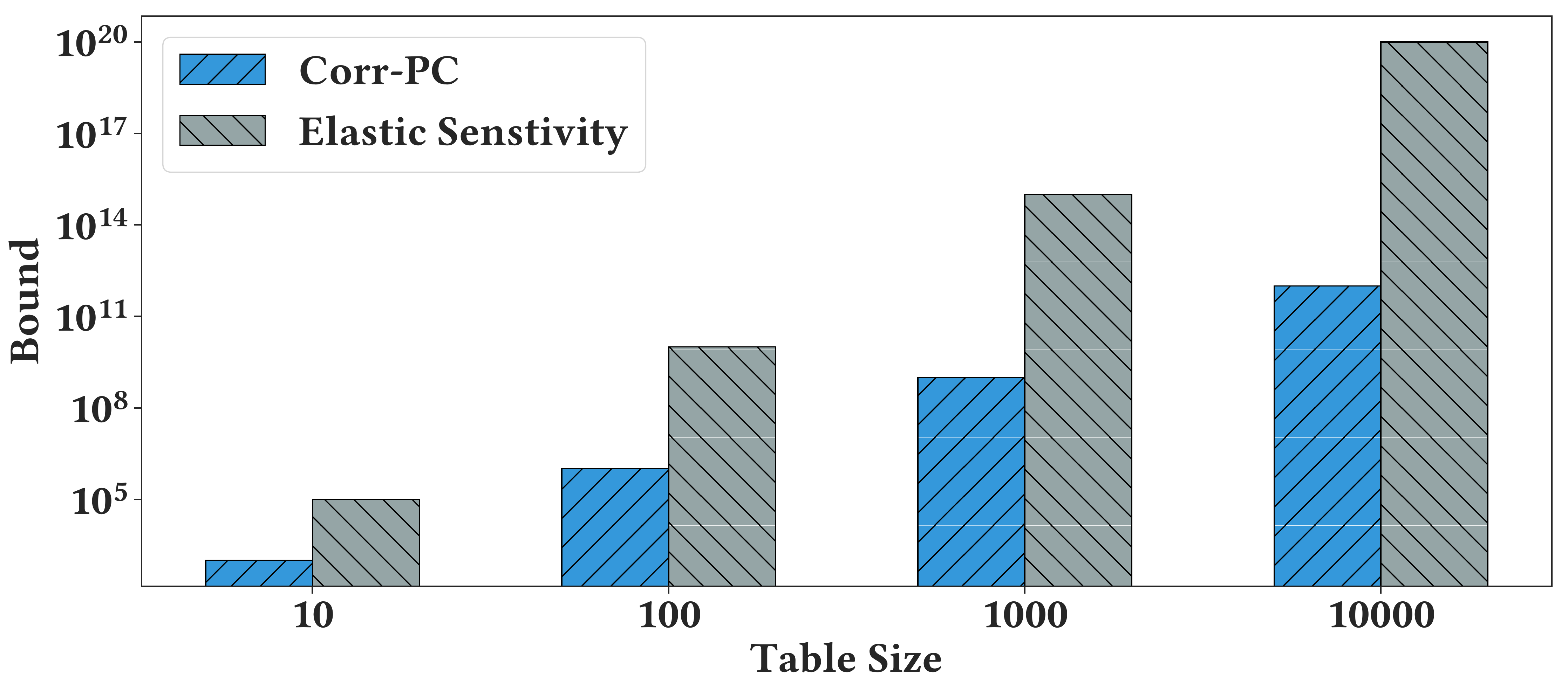}
    \caption{We compare the bound derived by our approach (Corr-PC) with state of the art baseline Elastic Sensitivity on the triangle counting problem of different table sizes (TOP) and an acyclic join (BOTTOM).}
    \label{triangle}
\end{figure}


\noindent \emph{Acyclic Joins.}
We also consider the following join query:
\[
R1(x_1, x_2) \bowtie R2(x_2, x_3)  ... \bowtie R5(x_5, x_{6})
\]
We generate 5 tables, each with $K$ rows and use the two approaches to evaluate the size of the join results. We vary the value of $K$ to understand how the bounds change accordingly. The results are shown in Figure \ref{triangle} (BOTTOM), we can see that elastic sensitivity always assumes the worst-case scenario thus generates the bound for a Cartesian product of the tables that is several magnitudes looser than our approach.

\begin{table*}[ht]\small
\begin{tabular}{llllllllllll}
\hline
\multicolumn{1}{c}{Dataset} & \multicolumn{1}{c}{Query} & \multicolumn{1}{c}{PredAttr} & \multicolumn{1}{c}{PC} & \multicolumn{1}{c}{Hist} & US-1p & US-10p & US-1n & \multicolumn{1}{c}{US-10n} & ST-1n & ST-10n & \multicolumn{1}{c}{Gen} \\ \hline
 &  & Time & 0 & 0 & 0 & 0 & 0 & 0 & 0 & 0 & 4 \\
 & COUNT(*) & DevID & 0 & 0 & 2 & 3 & 0 & 0 & 0 & 0 & 925 \\
Intel Wireless &  & DevID, Time & 0 & 0 & 12 & 3 & 0 & 0 & 0 & 0 & 591 \\
 &  & Time & 0 & 0 & 2 & 0 & 0 & 0 & 0 & 0 & 0 \\
 & SUM(Light) & DevID & 0 & 0 & 3 & 0 & 0 & 0 & 0 & 0 & 518 \\
 &  & DevID, Time & 0 & 0 & 24 & 4 & 8 & 2 & 9 & 1 & 205 \\ \hline
 &  & Latitude & 0 & 0 & 15 & 0 & 0 & 0 & 0 & 0 & 0 \\
 & COUNT(*) & Longitude & 0 & 0 & 36 & 0 & 0 & 0 & 0 & 0 & 0 \\
 &  & Lat, Lon & 0 & 0 & 39 & 0 & 0 & 0 & 0 & 0 & 0 \\
Airbnb@NYC &  & Latitude & 0 & 0 & 16 & 0 & 13 & 0 & 11 & 6 & 14 \\
 & SUM(Price) & Longitude & 0 & 0 & 36 & 0 & 35 & 0 & 24 & 16 & 3 \\
 &  & Lat, Lon & 0 & 0 & 45 & 19 & 39 & 0 & 39 & 13 & 17 \\ \hline
 &  & Port & 0 & 0 & 1 & 1 & 0 & 0 & 0 & 0 & 531 \\
 & COUNT(*) & Date & 0 & 0 & 1 & 0 & 0 & 0 & 0 & 0 & 744 \\
 &  & Port, Date & 0 & 0 & 13 & 1 & 0 & 0 & 0 & 0 & 362 \\
Border Cross &  & Port & 0 & 0 & 173 & 1 & 1 & 1 & 1 & 1 & 0 \\
 & SUM(Value) & Date & 0 & 0 & 33 & 1 & 1 & 0 & 0 & 0 & 0 \\
 &  & Port, Date & 0 & 0 & 192 & 12 & 20 & 0 & 15 & 0 & 0 \\ \hline
\end{tabular}

\caption{Over a \changed{1000} randomly chosen predicates, we record the number of failure events of different ``error bound'' frameworks. A failure event is one where an observed outcome is outside the range of the returned bounds. For a 99\% confidence interval, as used in our experiments, one would expect a 1\% failure rate for the sampling-based frameworks---but is significantly higher in practice for small skewed datasets. PCs, and as a special case Histograms, are guaranteed not to fail if the assumptions are satisfied.}
\label{failure-rate-tbl}
\end{table*}

\subsection{Probabilistic Confidence Intervals are Often Unreliable on Real Data}
Table \ref{failure-rate-tbl} presents different techniques and their ``failure rate'' over 1000 queries, which is the number of queries for which the true value exceeded what was produced in a bound.
The most common technique by far is to rely on the Central Limit Theorem (US-1p, US-10p).
Estimating this standard error from a sample is often far more unreliable than one would normally expect.
We use a 99\% confidence interval for a CLT bound given $N$ samples and $10N$ samples, and observe that the failure rate is far higher than 1\%.
In this missing data setting, a small number of example rows fail to accurately capture the ``spread'' of a distribution.

Next, we can make the sample-based confidence intervals a much more conservative non-parametric model (US-1n, US-10n), which holds under milder assumptions. Such a bound relies on an estimate of the min and max values and not an accurate estimate of the standard error.
Predictably, this performs much better than the CLT approach. 
However, as we can see in the table, non-parametric bound baselines still fail more often than one would expect over \changed{1000} queries.
Small samples and selective queries pose a fundamental challenge to these approaches.
Stratified samples do not solve this problem either. While, they cover the space more evenly, for any given strata, they can have a high failure rate. 

One could intuitively fix this problem by annotating the strata in a stratified sample with metadata that accurately depicts min and max values.
This is exactly the definition of PCs.
The PC technique and Histograms always generate hard bounds for queries because for the same number of ``bits'' of information they capture the entire spread of values much more accurately.
For the purposes of bounding, the example tuples provided by a sample are not as useful as the ranges.

Finally, we use the generative approach to model the joint data distribution.
We draw samples from this model and use that to produce a confidence interval.
Such an approach works very well on some datasets/queries but not others.
These experiments illustrate how important a guaranteed ``0 failure rate'' is for real-world decision making. Statistical confidence intervals can give a false sense of security in real-world data.

\section{Related Work}
\label{sec:related}
The overarching challenge addressed in the PC framework is related to the concept of ``reverse data management'' proposed by Meliou et al.~\citep{meliou2011reverse, meliou2012tiresias}. 
Meliou et al. argue that as data grow in complexity, analysts will increasingly want to know not what their data currently says but what changes have to happen to the dataset to force a certain outcome.
Such \emph{how-to} analyses are useful in debugging, understanding sensitivity, as well as planning for future data.
Meliou et al. build on a long line of \emph{what-if} analysis and data provenance research, which study simulating hypothetical updates to a database and understanding how query results might change~~\citep{deutch2013caravan, buneman2001and}.
While we address similar concerns to this line of work in spirit, our focus on aggregate queries and confidence intervals leads to a very different set of technical contributions.
The PC framework should be evaluated much more like a synopsis data structure than a data provenance reasoning systems.

Therefore, our experiments largely focus on evaluations against other synopsis structures and how to extract confidence intervals from them~\citep{cormode2011synopses}. While Approximate Query Processing (AQP) has been studied for several decades~\citep{Olston:2001:APS:375663.375710}, it is well-known that the confidence intervals produced can be hard to interpret~\citep{kennedy:2019:ifds:analyzing}.
This is because estimating the spread of high dimensional data from a small sample is fundamentally hard, and the most commonly used Central-Limit Theorem-based confidence intervals rely on estimated sample variance.
Especially for selective queries, these estimates can be highly fallible---a 95\% confidence interval may ``fail'' significantly more than 5\% of the time~\citep{agarwal2014knowing}. 
Unfortunately, as confidence intervals become more conservative, e.g., using more general statistical bounding techniques, their utility drops~\citep{hellerstein1997online}.
In a sense, our optimization algorithm automatically navigates this trade-off. 
The algorithm optimizes the tightest bound given the available information in the form of PCs.
We interpret PCs as generalized histograms with overlapping buckets and uncertain bucket counts.
Despite these differences with AQP, we do believe that the connections between uncertainty estimation and dirty data (like missing rows) are under-studied~\citep{krishnan2015stale, krishnan2015sampleclean}.
We also believe that in future work mixed systems with both PCs and samples can have the best of both worlds, e.g., augmenting Quickr with PC estimation~\citep{kandula2016quickr}.

Deterministic alternatives to AQP have been studied in some prior work.
Potti et al. propose a paradigm called DAQ \citep{potti2015daq} that does reason about hard ranges instead of confidence intervals.
DAQ models uncertainty at relation-level instead of predicate-level like in PCs and DAQ does not handle cardinality variation. 
In the limited scenario of windowed queries over time-series data, deterministic bounds have been studied~\citep{brito2017efficient}.
The technical challenge arises with overlapping constraints and complex query structures (like join conditions and arbitrary predicates).
Similarly, we believe that classical variance reduction techniques for histograms could be useful for PC generation in future work~\citep{poosala1996improved}, since 
histograms are a dense 1-D special case of our work.

\onethree{
c-tables are one of the classical approaches for representing missing data in a relation~\cite{imielinski1989incomplete}.
Due to the frequency constraints in Predicate-Constraint sets, we can represent cases that go beyond the typical closed-world assumption (CWA) is required in c-tables, where all records are known in advance and null cells are specifically annotated.}
There is also recent work that studies missing rows from databases. m-tables study variable cardinality representations to go beyond the CWA. In m-tables, cardinality constraints are specified per-relation. We specify frequency constraints per predicate. However, Sundarmurthy et al. ~\cite{sundarmurthy2017m} do not consider the problem of aggregate query processing on uncertain relations. There is similarly related work that studies intentionally withholding partitioned data for improved approximate query performance~\cite{Sundarmurthy:2018:EDP:3206333.3206337}. We believe that the novelty of our framework is the efficient estimation of aggregate query confidence intervals.
Similarly, the work by Burdik et al. is highly related where they study databases with certain ``imprecise'' regions instead of realized tuples~\citep{burdick2006efficient}. And the approach proposed by Cai et al. \citep{cai2019pessimistic} based on random hash partitions can only handle cardinality estimations over joins.
Cai et al. highlight many of the core challenges but fails to produce confidence intervals or handle inner-equality join queries optimally like our framework.

It is important to note, that our objective is not to build the most expressive language to represent uncertain data but rather one that we can pragmatically use to bound aggregate queries.

The privacy literature has studied a version of this problem: bounding aggregate queries on uncertain data~\citep{zhang2007aggregate, johnson2018towards}. In fact, Zhang et al. can be seen as solving the partitioned version of our problem~\citep{zhang2007aggregate}. However, they do not need to consider the overlapping case and joins in the way that our work does.

\section{Conclusion}
We proposed a framework that can produce automatic contingency analysis, i.e., the range of values an aggregate SQL query could take, under formal constraints describing the variation and frequency of missing data tuples.
There are several interesting avenues for future work.
First, we are interested in studying these constraints in detail to model dirty or corrupted data.
Rather than considering completely missing or dirty rows, we want to consider rows with some good and some faulty information.
From a statistical inference perspective, this new problem statement likely constitutes a middle ground between sampling and Predicate-Constraints.
Second, we would like to further understand the robustness properties of result ranges computed by Predicate-Constraints as well as other techniques.
Understanding when result ranges are meaningful for real-world analytics will be an interesting study.
Finally, we would like to extend the Predicate-Constraint framework to be more expressive and handle a broader set of queries.

\bibliographystyle{abbrv}
\bibliography{refs}

\begin{thebibliography}{10}

\bibitem{agarwal2014knowing}
S.~Agarwal, H.~Milner, A.~Kleiner, A.~Talwalkar, M.~Jordan, S.~Madden,
  B.~Mozafari, and I.~Stoica.
\newblock Knowing when you're wrong: building fast and reliable approximate
  query processing systems.
\newblock In {\em Proceedings of the 2014 ACM SIGMOD international conference
  on Management of data}, pages 481--492. ACM, 2014.

\bibitem{airbnbnyc}
Airbnb.
\newblock Airbnb new york city open data 2019.
\newblock \url{https://www.kaggle.com/dgomonov/new-york-city-airbnb-open-data},
  2019.

\bibitem{brito2017efficient}
J.~Brito, K.~Demirkaya, B.~Etienne, Y.~Katsis, C.~Lin, and Y.~Papakonstantinou.
\newblock Efficient approximate query answering over sensor data with
  deterministic error guarantees.
\newblock {\em arXiv preprint arXiv:1707.01414}, 2017.

\bibitem{buneman2001and}
P.~Buneman, S.~Khanna, and T.~Wang-Chiew.
\newblock Why and where: A characterization of data provenance.
\newblock In {\em International conference on database theory}, pages 316--330.
  Springer, 2001.

\bibitem{burdick2006efficient}
D.~Burdick, P.~M. Deshpande, T.~Jayram, R.~Ramakrishnan, and S.~Vaithyanathan.
\newblock Efficient allocation algorithms for olap over imprecise data.
\newblock In {\em Proceedings of the 32nd international conference on Very
  large data bases}, pages 391--402. VLDB Endowment, 2006.

\bibitem{cai2019pessimistic}
W.~Cai, M.~Balazinska, and D.~Suciu.
\newblock Pessimistic cardinality estimation: Tighter upper bounds for
  intermediate join cardinalities.
\newblock In {\em Proceedings of the 2019 International Conference on
  Management of Data}, pages 18--35. ACM, 2019.

\bibitem{chu2016data}
X.~Chu, I.~F. Ilyas, S.~Krishnan, and J.~Wang.
\newblock Data cleaning: Overview and emerging challenges.
\newblock In {\em Proceedings of the 2016 international conference on
  Management of Data}, pages 2201--2206. ACM, 2016.

\bibitem{cormode2011synopses}
G.~Cormode, M.~Garofalakis, P.~J. Haas, C.~Jermaine, et~al.
\newblock Synopses for massive data: Samples, histograms, wavelets, sketches.
\newblock {\em Foundations and Trends{\textregistered} in Databases},
  4(1--3):1--294, 2011.

\bibitem{Z3}
L.~de~Moura and N.~Bj{\o}rner.
\newblock Z3: An efficient smt solver.
\newblock In C.~R. Ramakrishnan and J.~Rehof, editors, {\em Tools and
  Algorithms for the Construction and Analysis of Systems}, pages 337--340,
  Berlin, Heidelberg, 2008. Springer Berlin Heidelberg.

\bibitem{deutch2013caravan}
D.~Deutch, Z.~G. Ives, T.~Milo, and V.~Tannen.
\newblock Caravan: Provisioning for what-if analysis.
\newblock 2013.

\bibitem{WCOJ}
E.~Friedgut.
\newblock Hypergraphs, entropy, and inequalities.
\newblock {\em The American Mathematical Monthly}, 111(9):749--760, 2004.

\bibitem{hellerstein1997online}
J.~M. Hellerstein, P.~J. Haas, and H.~J. Wang.
\newblock Online aggregation.
\newblock In {\em Acm Sigmod Record}, volume~26, pages 171--182. ACM, 1997.

\bibitem{imielinski1989incomplete}
T.~Imieli{\'n}ski and W.~Lipski~Jr.
\newblock Incomplete information in relational databases.
\newblock In {\em Readings in Artificial Intelligence and Databases}, pages
  342--360. Elsevier, 1989.

\bibitem{johnson2018towards}
N.~Johnson, J.~P. Near, and D.~Song.
\newblock Towards practical differential privacy for sql queries.
\newblock {\em Proceedings of the VLDB Endowment}, 11(5):526--539, 2018.

\bibitem{kandula2016quickr}
S.~Kandula, A.~Shanbhag, A.~Vitorovic, M.~Olma, R.~Grandl, S.~Chaudhuri, and
  B.~Ding.
\newblock Quickr: Lazily approximating complex adhoc queries in bigdata
  clusters.
\newblock In {\em Proceedings of the 2016 international conference on
  management of data}, pages 631--646. ACM, 2016.

\bibitem{kennedy:2019:ifds:analyzing}
O.~Kennedy and B.~Glavic.
\newblock Analyzing uncertain tabular data.
\newblock In Éloi Bossé and G.~L. Rogova, editors, {\em Information Quality
  in Information Fusion and Decision Making}, pages 243--277. 2019.

\bibitem{kraska2018northstar}
T.~Kraska.
\newblock Northstar: An interactive data science system.
\newblock {\em Proceedings of the VLDB Endowment}, 11(12):2150--2164, 2018.

\bibitem{krishnan2015stale}
S.~Krishnan, J.~Wang, M.~J. Franklin, K.~Goldberg, and T.~Kraska.
\newblock Stale view cleaning: Getting fresh answers from stale materialized
  views.
\newblock {\em Proceedings of the VLDB Endowment}, 8(12):1370--1381, 2015.

\bibitem{krishnan2015sampleclean}
S.~Krishnan, J.~Wang, M.~J. Franklin, K.~Goldberg, T.~Kraska, T.~Milo, and
  E.~Wu.
\newblock Sampleclean: Fast and reliable analytics on dirty data.
\newblock 2015.

\bibitem{meliou2011reverse}
A.~Meliou, W.~Gatterbauer, and D.~Suciu.
\newblock Reverse data management.
\newblock {\em Proceedings of the VLDB Endowment}, 4(12), 2011.

\bibitem{meliou2012tiresias}
A.~Meliou and D.~Suciu.
\newblock Tiresias: the database oracle for how-to queries.
\newblock In {\em Proceedings of the 2012 ACM SIGMOD International Conference
  on Management of Data}, pages 337--348. ACM, 2012.

\bibitem{ngo2018worst}
H.~Q. Ngo, E.~Porat, C.~R{\'e}, and A.~Rudra.
\newblock Worst-case optimal join algorithms.
\newblock {\em Journal of the ACM (JACM)}, 65(3):16, 2018.

\bibitem{bordercrossing}
T.~B. of~Transportation~Statistics.
\newblock Border crossing dataset.
\newblock \url{https://www.kaggle.com/akhilv11/border-crossing-entry-data},
  2019.

\bibitem{Olston:2001:APS:375663.375710}
C.~Olston, B.~T. Loo, and J.~Widom.
\newblock Adaptive precision setting for cached approximate values.
\newblock In {\em Proceedings of the 2001 ACM SIGMOD International Conference
  on Management of Data}, SIGMOD '01, pages 355--366, New York, NY, USA, 2001.
  ACM.

\bibitem{intelwireless}
W.~H. Peter~Bodik et~al.
\newblock Intel wireless dataset.
\newblock \url{http://db.csail.mit.edu/labdata/labdata.html}, 2004.

\bibitem{poosala1996improved}
V.~Poosala, P.~J. Haas, Y.~E. Ioannidis, and E.~J. Shekita.
\newblock Improved histograms for selectivity estimation of range predicates.
\newblock In {\em ACM Sigmod Record}, volume~25, pages 294--305. ACM, 1996.

\bibitem{potti2015daq}
N.~Potti and J.~M. Patel.
\newblock Daq: a new paradigm for approximate query processing.
\newblock {\em Proceedings of the VLDB Endowment}, 8(9):898--909, 2015.

\bibitem{sundarmurthy2017m}
B.~Sundarmurthy, P.~Koutris, W.~Lang, J.~Naughton, and V.~Tannen.
\newblock m-tables: Representing missing data.
\newblock In {\em 20th International Conference on Database Theory (ICDT
  2017)}. Schloss Dagstuhl-Leibniz-Zentrum fuer Informatik, 2017.

\bibitem{Sundarmurthy:2018:EDP:3206333.3206337}
B.~Sundarmurthy, P.~Koutris, and J.~Naughton.
\newblock Exploiting data partitioning to provide approximate results.
\newblock In {\em Proceedings of the 5th ACM SIGMOD Workshop on Algorithms and
  Systems for MapReduce and Beyond}, BeyondMR'18, pages 5:1--5:5, New York, NY,
  USA, 2018. ACM.

\bibitem{zhang2007aggregate}
Q.~Zhang, N.~Koudas, D.~Srivastava, and T.~Yu.
\newblock Aggregate query answering on anonymized tables.
\newblock In {\em 2007 IEEE 23rd international conference on data engineering},
  pages 116--125. IEEE, 2007.

\end{thebibliography}

\end{document}